 \newtheorem{thm}{Theorem}[section]
 \newtheorem{cor}[thm]{Corollary}
 \newtheorem{lem}[thm]{Lemma}
 \newtheorem{prop}[thm]{Proposition}
 \theoremstyle{definition}
 \newtheorem{defn}[thm]{Definition}
 \theoremstyle{remark}
 \newtheorem{rem}[thm]{Remark}
 \numberwithin{equation}{section}
 \newcommand{\eqa}{\begin{eqnarray}}
\newcommand{\eeqa}{\end{eqnarray}}
\newcommand{\beq}{\begin{equation}}
\newcommand{\eeq}{\end{equation}}
\newcommand{\nn}{\nonumber}
\newcommand{\pal}{\partial}
\begin{document}

%
%
%
%
%
%
%
%
%

\title[SFT and Schur polynomials]
 {Symplectic field theory of a disk, quantum integrable systems, and Schur polynomials}

\author[B.~Dubrovin]{Boris Dubrovin}

\address{%
SISSA\\
Via Bonomea, 265\\
I-34136 TRIESTE\\
ITALY\\
and N.N.Bogolyubov Laboratory of Geometrical Methods in Mathematical Physics,\\
Moscow State University, Moscow, RUSSIA}

\email{dubrovin@sissa.it}


\keywords{Quantum integrable systems, Symplectic field theory, Schur polynomials}

\date{June 15, 2014}

\begin{abstract}
We consider commuting operators obtained by quantization of Hamiltonians of the Hopf (aka dispersionless KdV) hierarchy. Such operators naturally arise in the setting of Symplectic Field Theory (SFT). A complete set of common eigenvectors of these operators is given by Schur polynomials. We use this result for computing the SFT potential of a disk. 
\end{abstract}

\maketitle
\section{Introduction}

\subsection{Hopf integrable hierarchy and its quantization}

Hopf equation
\beq\label{hopf}
u_t+u\, u_x=0
\eeq
can be considered as the spatially one-dimensional analogue of the Euler equations of motion of ideal incompressible fluid. It
is perhaps the simplest example of an integrable nonlinear PDE. In the community of experts in integrable systems it is often called dispersionless Korteweg--de Vries (KdV) equation; sometimes it is also called (inviscid) Burgers equation. Eq. \eqref{hopf} is an infinite-dimensional analogue of a Hamiltonian dynamical system
$$
u_t+\pal_x \frac{\delta H}{\delta u(x)} =0, \quad H=\frac1{2\pi} \int_0^{2\pi} \frac{u^3(x)}6\, dx
$$
(to be more specific let us consider \eqref{hopf} as a dynamical system on the space of smooth $2\pi$-periodic functions). It possesses a complete family of pairwise commuting conservation laws
\beq\label{hopfh}
H_n=\frac1{2\pi}\int_0^{2\pi}\frac{u^{n+2}(x)}{(n+2)!}\, dx, \quad n\geq -1
\eeq
\begin{align*}
&
\{ H_n, H_m\}:= \frac1{2\pi} \int_0^{2\pi} \frac{\delta H_n}{\delta u(x)} \pal_x \left( \frac{\delta H_m}{\delta u(x)}\right)\, dx
\\
&
\\
&
=\frac1{2\pi}\int_0^{2\pi}\pal_x \left( \frac{u^{m+n+2}(x)}{(n+1)!\, m! \,(m+n+2)}\right)\, dx=0.
\end{align*}
The Hopf Hamiltonian coincides with the functional $H_1$; the Hamiltonian $H_0$ generates spatial translations; the very first Hamiltonian $H_{-1}$ is a Casimir of the degenerate Poisson bracket defined by the operator $\pal_x$.
A convenient generating function of the commuting Hamiltonians can be chosen in the form
\beq\label{hopfg}
{\mathcal H}(z) =\frac1{2\pi} \int_0^{2\pi}e^{z\, u(x)}dx=1+\sum_{n\geq -1} H_n z^{n+2}.
\eeq

In general the procedure of quantization of a Hamiltonian system on a symplecic manifold converts functions on the manifold into operators acting on a suitable space of states. It depends on the choice of canonical coordinates on the phase space.

In our case one can use the Fourier coefficients of the $2\pi$-periodic function
$$
u(x) =\sum_{n\in \mathbb Z}u_n e^{i\, n\, x}
$$
as coordinates on the phase space. Their Poisson brackets read
$$
\{ u_n, u_m\} =i \, n\, \delta_{m, -n} .
$$
Thus one can choose positive Fourier coefficients as canonical coordinates on the symplectic leaves $u_0={\rm const}$
$$
q_n=u_n, \quad n\geq 1.
$$
Negative Fourier coefficients will be denoted
$$
p_n =u_{-n}, \quad n\geq 1.
$$
Their Poisson brackets have the canonical form up to a constant factor
\beq\label{pb1}
\{ p_n, q_m\} =-i\, n\, \delta_{mn}, \quad \{ u_0, q_n\} =\{u_0, p_n\}=0.
\eeq

After the quantization one has to replace the variables $q_n$, $p_m$ by operators $\hat q_n$, $\hat p_m$ satisfying the commutation relations
\beq\label{pb2}
\left[\hat p_n, \hat q_m\right] =\hbar\, n\, \delta_{mn}.
\eeq
Here $\hbar$ is an independent parameter of the quantization procedure called Planck constant. 

The operators $\hat q_k$, $\hat p_k$ admit the following realization on the space of polynomials in the variables $q_1$, $q_2$, \dots
\beq\label{pq}
\hat q_k f(q)=q_k f(q), \quad \hat p_k f(q) = \hbar \, k\frac{\partial }{\partial q_k} f(q), \quad k=1, \, 2, \dots.
\eeq

We can associate certain operators acting on the same space of polynomials also with the Hamiltonians $H_n$ applying the procedure of normal ordering. That means that, in any monomial in the $p$- and $q$-variables all $p_n$-s must be written on the right and, then, replaced by operators $\hat p_n=\hbar \, n\, \frac{\pal}{\pal q_n}$. Denote 
\beq\label{h0}
\hat H_n^0= \frac1{2\pi} :\int_0^{2\pi} \frac{u^{n+2}(x)}{(n+2)!} dx: 
\eeq
the resulting \emph{normally ordered} quantum operators. For example,
\begin{align*}
&
\hat H_{-1}^0=u_0\\
&
\hat H_0^0=\frac1{2\pi} : \int_0^{2\pi} \frac{u^2}2dx: =\frac{u_0^2}2+\sum_{n\geq 1} \hat q_n \hat p_n 
\\
&
\hat H_1^0=\frac1{2\pi}:\int_0^{2\pi}  \frac{u^3}6  \, dx: =\frac{u_0^3}6+u_0 \sum\hat q_k \hat p_k+\frac12\sum_{i, \, j} \left( \hat q_i \hat q_j \hat p_{i+j} + \hat q_{i+j} \hat p_i \hat p_j\right)
\end{align*}
etc. In the semiclassical limit $\hbar \to 0$ the operators are replaced by their symbols clearly coinciding with the original Hamiltonians $H_n$.

However, these quantum operators do not commute any more; this can be seen already in the commutator
$$
\left[ \hat H_1^0, \hat H_2^0\right]=\frac{\hbar^2}8 \sum i\, j(i+j) \left(\hat q_{i+j} \hat p_i \hat p_j -\hat q_i \hat q_j \hat p_{i+j}\right)\neq 0.
$$
We arrive at the following quantization problem.
\begin{defn} Quantization of the integrable system \eqref{hopf} is a family of pairwise commuting operators $\hat H_n$, $n\geq -1$, acting on the Fock space of polynomials $\mathbb C[q_1, q_2, \dots]$ and, in the semiclassical limit $\hbar\to 0$, coinciding with $H_n$ given by  \eqref{hopfh} .
\end{defn}

Such quantization problem\footnote{Quantization of the KdV equation equipped with the \emph{second} Poisson bracket was first considered in a series of papers by V.~Bazhanov, S.~Lukyanov and A.~Zamolodchikov \cite{blz}. More recently the problem of quantization of the intermediate long wave equation was addressed in \cite{bz}.} was addressed in \cite{pogreb} where a recursion procedure for constructing commuting quantum Hamiltonians has been developed. Independently a closed formula for the commuting quantum Hamiltonians was found by Y.~Eliashberg \cite{private}, \cite{e}, see also \cite{rossi}, in the setting of Symplectic Field Theory (see below). This formula plays the central role in the present paper. 

\begin{prop} The operators $\hat H_n$, $n\geq -1$ defined by their generating function
\begin{equation}\label{generalh}
\hat {\mathcal H}(z, u_0,\hbar) =\frac1{2\pi\, s\left( \hbar^{1/2} z\right)} : \int_0^{2\pi}e^{z\, s\left( i\hbar^{1/2} z\partial_x\right)u(x)}dx:=1+\sum_{n\geq -1}\hat  H_n z^{n+2}
\end{equation}
where
$$
u(x)=u_0+\sum_{k\geq1} \left( \hat q_k e^{ikx} +\hat p_k e^{-ikx}\right)
$$
and
$$
s(t) =\frac{\sinh \frac{t}2}{\frac{t}2}
$$
commute pairwise
$$
\left[ \hat H_n, \hat H_m\right]=0, \quad n, \, m\geq -1.
$$
\end{prop}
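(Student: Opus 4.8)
The plan is to reduce the pairwise commutativity to a statement about the normally-ordered vertex-type operator appearing in \eqref{generalh}, and then to prove that statement by a direct computation of the commutator using Wick's theorem for normally ordered exponentials. First I would introduce, for each complex parameter $z$, the linear form
$$
A(z) = z\, s\!\left( i\hbar^{1/2} z\, \partial_x\right) u(x) = z\, u_0 + \sum_{k\geq 1} z\, s\!\left(\hbar^{1/2} z\, k\right)\left(\hat q_k e^{ikx} + \hat p_k e^{-ikx}\right),
$$
so that the generating function is $\hat{\mathcal H}(z) = \frac{1}{2\pi\, s(\hbar^{1/2}z)}\,\frac{1}{2\pi}\int_0^{2\pi} {:}\exp A(z){:}\, dx$ (with the $x$-integral picking out the $x$-independent part). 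The key computational input is the CBH/Wick identity for two such exponentials: since $[A(z),A(w)]$ is a $c$-number (it is a sum of commutators $[\hat p_k,\hat q_k]=\hbar k$, weighted by the $s$-factors), one has
$$
{:}e^{A(z)}{:}\ {:}e^{A(w)}{:}\ = e^{\langle A(z)A(w)\rangle}\, {:}e^{A(z)+A(w)}{:},
$$
where $\langle A(z)A(w)\rangle$ denotes the contraction (the "positive-negative" part of $A(z)A(w)$). The main step is to compute this contraction explicitly: contracting $\hat p_k$ from $A(z)$ with $\hat q_k$ from $A(w)$ produces, after the $x$-integration, a factor
$$
\sum_{k\geq 1} \hbar\, k\, z\, w\, s(\hbar^{1/2}zk)\, s(\hbar^{1/2}wk)\, e^{-ikx}\Big|_{\text{paired with }e^{ikx}},
$$
and the whole point is that the prefactor $\frac{1}{s(\hbar^{1/2}z)\, s(\hbar^{1/2}w)}$ together with the sum over $k$ conspires so that the contraction term is \emph{symmetric} in $z$ and $w$.

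Granting that, the commutator is immediate: $\hat{\mathcal H}(z)\hat{\mathcal H}(w) - \hat{\mathcal H}(w)\hat{\mathcal H}(z)$ equals $\big(e^{\langle A(z)A(w)\rangle} - e^{\langle A(w)A(z)\rangle}\big)$ times a common normally-ordered factor, hence vanishes once the two contractions agree. Expanding in powers of $z$ and $w$ then gives $[\hat H_n,\hat H_m]=0$ for all $n,m\geq -1$. I would also check separately that $u_0$ (equivalently $\hat H_{-1}$) is central, which is clear since $u_0$ Poisson-commutes with all $\hat q_k,\hat p_k$ and appears only additively in $A(z)$; and I would verify the semiclassical limit: as $\hbar\to 0$ one has $s(\hbar^{1/2}z)\to 1$ and $s(i\hbar^{1/2}z\partial_x)u\to u$, so $\hat{\mathcal H}(z)\to \frac{1}{2\pi}\int_0^{2\pi} e^{zu(x)}dx = {\mathcal H}(z)$, recovering \eqref{hopfg} and hence the classical $H_n$.

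The heart of the argument — and the step I expect to be the main obstacle — is establishing the symmetry of the contraction $\langle A(z)A(w)\rangle$, i.e. showing that the specific choice $s(t)=\sinh(t/2)/(t/2)$ together with the normalization $1/s(\hbar^{1/2}z)$ is exactly what makes the mismatch between $\langle A(z)A(w)\rangle$ and $\langle A(w)A(z)\rangle$ cancel. Concretely, $\langle A(z)A(w)\rangle - \langle A(w)A(z)\rangle = \sum_{k\geq1}\hbar k\, zw\, s(\hbar^{1/2}zk)s(\hbar^{1/2}wk)\,(e^{-ikx}e^{ikx}-e^{ikx}e^{-ikx})$-type terms, and one must perform the geometric-type summation over $k$ and see that the resulting function of $z,w$ is symmetric. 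This is where the identity
$$
\sum_{k\geq 1} t\, \frac{\sinh(ak)}{k}\,\cdots
$$
style manipulations enter; I would handle it by summing the series in closed form (it telescopes to a ratio of $\sinh$'s precisely because of the $s$-normalization) rather than by term-by-term comparison. Once this lemma is in hand, everything else is bookkeeping with generating functions.
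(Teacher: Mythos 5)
Your strategy (Wick's theorem for the normally ordered exponentials, then comparison of the two contractions) is a legitimate route and is essentially the vertex-operator proof one finds in \cite{pogreb}; it is \emph{not} the route the paper takes. But as written the argument has a genuine gap at precisely the step you call "immediate". The two operators carry independent integration variables, so the relevant contraction is $C(z,x;w,y)=\sum_{k\geq 1}\hbar k\,zw\,s(\hbar^{1/2}zk)\,s(\hbar^{1/2}wk)\,e^{ik(y-x)}$, a series in \emph{strictly positive} Fourier modes of $y-x$, while the reversed ordering produces the series in strictly negative modes. Summing the series, $e^{C}$ is the expansion of the rational function $\bigl(\cos(y-x)-\cosh a\bigr)/\bigl(\cos(y-x)-\cosh b\bigr)$, $a=\hbar^{1/2}(z-w)/2$, $b=\hbar^{1/2}(z+w)/2$, in non-negative powers of $e^{i(y-x)}$, and the reversed ordering gives the expansion of the \emph{same} function in non-positive powers. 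Symmetry of this function under $z\leftrightarrow w$ (or under $y-x\mapsto x-y$) therefore does \emph{not} make the commutator vanish: two expansions of one rational function in different regions differ by a formal delta-function distribution supported at the poles $e^{i(y-x)}=e^{\pm b}$. This is the standard "mutual locality does not imply commutativity" phenomenon for vertex operators, and it is exactly where the work lies. To close the argument you must (i) do the partial-fraction decomposition and check that the residues at the two poles are opposite, and (ii) check that the two localized insertions ${:}\,e^{A(z,x)+A(w,x\pm ib)}\,{:}$ coincide after using the telescoping $A(z,x)+A(w,x+ib)=(z+w)u_0+\frac1{\hbar^{1/2}}\bigl[\chi(x-i\hbar^{1/2}z/2)-\chi(x+i\hbar^{1/2}(z/2+w))\bigr]$ and the invariance of the period integral under complex shifts of $x$. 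Both checks do work out, but neither is implied by the symmetry you invoke, and your written contraction (which carries only one spatial variable) does not yet set the computation up correctly.

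For comparison, the paper does not prove the Proposition by any bosonic computation at all: it conjugates $\hat{\mathcal H}(z)$ by the boson--fermion correspondence and finds (formula \eqref{rossi}) that it becomes $e^{zu_0}\bigl(z\sum_k e^{kz}{:}\psi_k\psi_k^*{:}+1/s(z)\bigr)$, a generating function of the mutually commuting diagonal operators ${:}\psi_k\psi_k^*{:}$; equivalently, Theorem \ref{thm1} exhibits a complete set of common eigenvectors (the Schur polynomials), from which commutativity is immediate. That route trades your residue analysis for the formulae \eqref{psitoqp} of the correspondence. If you want to keep your bosonic approach, supply the delta-function/residue step above; otherwise the fermionic diagonalization is the shorter path.
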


\begin{rem}The commuting operators $\hat H_n$ were recently rediscovered in essentially equivalent form in \cite{a2010}, \cite{kp}. 
\end{rem}

\subsection{Schur polynomials and eigenvectors of the quantum Hamiltonians}

Using the Taylor expansion of the function $s(t)$
$$
s(t)=1+\sum_{n\geq 1} \frac{t^{2n}}{2^{2n} (2n+1)!}
$$
one can spell out the formula \eqref{generalh} as follows
$$
\hat {\mathcal H}(z, u_0,\hbar)=\frac1{2\pi\, s\left( \hbar^{1/2} z\right)} : \int_0^{2\pi} e^{z\, u-\frac{\hbar\, z^3}{24} u''+\frac{\hbar^2 z^5}{1920} u^{(4)}-\dots}dx:
$$
so, in the semiclassical limit $\hbar \to 0$, one returns to the Hamiltonians \eqref{hopfh} but, starting from $\hat H_2$ there are nontrivial corrections
\begin{align*}
&
\hat H_{-1}=u_0\\
&
\hat H_0=\frac1{2\pi} : \int_0^{2\pi} \frac{u^2}2dx: -\frac{\hbar}{24}
\\
&
\hat H_1=\frac1{2\pi}:\int_0^{2\pi}  \frac{u^3}6 \, dx: -\frac{\hbar}{24} u_0
\\
&
\hat H_2=\frac1{2\pi} :\int_0^{2\pi} \left[ \frac{u^4}{24}+\frac{\hbar}{24} \left( u\, u''-\frac12 u^2 \right)\right]\, dx: +\frac{7\hbar^2}{5760}.
\end{align*}
Observe that, in the representation \eqref{pq} the nontrivial part of $\hat H_0$ coincides with the degree operator, up to an additive constant,
$$
\hat H_0|_{u_0=0}=\hbar \sum_n n\, q_n \frac{\pal}{\pal q_n}-\frac{\hbar}{24}
$$
and $\hat H_1$ with the cut-and-join operator
$$
\hat H_1|_{u_0=0}=\frac12 \sum_{i,\, j} \left( \hbar\, (i+j) q_i q_j \frac{\pal}{\pal q_{i+j}} +\hbar^2 i\, j \, q_{i+j} \frac{\pal^2}{\pal q_i \pal q_j}\right).
$$

It was observed in \cite{kl} that Schur polynomials are eigenvectors\footnote{It can also be deduced from the description \cite{amos} in terms of Jack polynomials of eigenstates of the Calogero--Sutherland operator at the limit $\beta\to 1$. I thank P. Wiegman for this reference.} of the cut-and-join operator. It turns out that the same is true for all commuting Hamiltonians \eqref{generalh}. In order to formulate the precise statement\footnote{A closely related result was recently announced in \cite{a2010}.} let us remind basic definitions of the theory of Schur polynomials.

Introduce polynomials $h_k(q)$ by
\begin{equation}\label{elem}
\sum_{k\geq 0} h_k(q) z^k =e^{\sum_{k\geq 1} q_k \frac{z^k}{k}}, \quad h_k=0\quad \mbox{for}\quad k<0.
\end{equation}
For any Young tableau $\lambda=(\lambda_1\geq \lambda_2\geq \dots \geq \lambda_n>0)$ (the number $n$ for the tableau $\lambda$ will be denoted $l(\lambda)$) define a polynomial
$$
s_\lambda(q)=\det \left( h_{\lambda_i-i+j}(q)\right)_{1\leq i, \, j\leq l(\lambda)}.
$$
They coincide with the standard Schur polynomials $s_\lambda(x)$, up to a normalization of the independent variables $x_k =\frac{q_k}{k}$. 

\begin{thm} \label{thm1} For any partition $\lambda$ the polynomial $ s_\lambda\left(\frac{q}{\hbar^{1/2}}\right)$ is an eigenvector of the operator \eqref{generalh}
\eqa\label{gener}
&&
\hat {\mathcal H}(z, u_0,\hbar)  s_\lambda\left(\frac{q}{\hbar^{1/2}}\right)={ E}(z,\lambda, u_0, \hbar) s_\lambda\left(\frac{q}{\hbar^{1/2}}\right)
\nn\\
&&
\\
&&
{ E}(z,\lambda, u_0, \hbar)=\hbar^{1/2} z\, e^{z\, u_0}\sum_{i= 1}^\infty e^{z\, \hbar^{1/2} \left( \lambda_i -i+\frac12\right)}, \quad {\rm Re}\, z>0.
\nn
\eeqa
\end{thm}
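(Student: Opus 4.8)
The plan is to make the operator \eqref{generalh} completely explicit on the Fock space, reduce the eigenvalue equation to the extraction of a single Fourier coefficient, evaluate that coefficient by the residue theorem, and identify the result with the eigenvalue of Macdonald's first order difference operator at the degenerate point $q=t$. First, since $s(t)$ is even and $\partial_x e^{\pm ikx}=\pm ik\,e^{\pm ikx}$, the operator $s(i\hbar^{1/2}z\,\partial_x)$ acts diagonally on Fourier modes and replaces $u(x)$ by $u_0+\sum_{k\ge 1}s(\hbar^{1/2}zk)\bigl(\hat q_k e^{ikx}+\hat p_k e^{-ikx}\bigr)$; normal ordering merely puts the $\hat q$'s to the left of the $\hat p$'s, so $:\exp\bigl(z\,s(i\hbar^{1/2}z\,\partial_x)u\bigr):$ equals $e^{zu_0}$ times an exponential in the $\hat q_k$'s times an exponential in the $\hat p_k$'s, and in the realization \eqref{pq} the latter is the translation operator $q_k\mapsto q_k+z\,s(\hbar^{1/2}zk)\,\hbar k\,e^{-ikx}$. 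Applying this to $s_\lambda(q/\hbar^{1/2})$ and rescaling $q_k=\hbar^{1/2}\tilde q_k$ removes all Planck constants: putting $\zeta=\hbar^{1/2}z$, $a=e^{\zeta/2}$, $b=e^{-\zeta/2}=a^{-1}$ and $W=e^{ix}$, the identity $\zeta k\,s(\zeta k)=2\sinh(\zeta k/2)=a^k-b^k$ shows the translation becomes $\tilde q_k\mapsto\tilde q_k+(aW^{-1})^k-(bW^{-1})^k$ and the multiplication becomes multiplication by $\exp\Bigl(\sum_{k\ge1}\frac{\tilde q_k}{k}\bigl((aW)^k-(bW)^k\bigr)\Bigr)$. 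Introducing auxiliary variables $y_1,y_2,\dots$ with $\tilde q_k=\sum_i y_i^k$, so that $s_\lambda(\tilde q)$ becomes the ordinary Schur polynomial $s_\lambda(y)$, the translation adjoins the letter $aW^{-1}$ and deletes $bW^{-1}$ from the alphabet, and the multiplication operator is the Cauchy kernel $\prod_i\frac{1-bWy_i}{1-aWy_i}$; hence
\[
\hat{\mathcal H}(z,u_0,\hbar)\,s_\lambda\!\left(\frac{q}{\hbar^{1/2}}\right)=\frac{e^{zu_0}}{s(\zeta)}\cdot\frac{1}{2\pi}\int_0^{2\pi}\Biggl(\prod_i\frac{1-bWy_i}{1-aWy_i}\Biggr)\,s_\lambda\bigl[\,y+aW^{-1}-bW^{-1}\,\bigr]\,dx,\qquad W=e^{ix},
\]
and the integration over $x$ amounts to extracting the coefficient of $W^0$ of the integrand.

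I would then evaluate that coefficient by residues. As a rational function of $W$ the integrand has, besides $W=0$, only the simple poles $W_i=(ay_i)^{-1}$, and its value at $W=\infty$ equals $(b/a)^{\#\{y_i\}}s_\lambda(y)$, which vanishes in the limit of infinitely many variables because $|b/a|=e^{-{\rm Re}\,\zeta}<1$ — this is precisely where the hypothesis ${\rm Re}\,z>0$ enters. The only point where the particular function $s(t)=\sinh(t/2)/(t/2)$ matters is the identity $ab=1$: at $W=W_i$ it gives $bW_iy_i=b/a=e^{-\zeta}$ and $\{aW_i^{-1}\}-\{bW_i^{-1}\}=\{e^{\zeta}y_i\}-\{y_i\}$, so that the virtual alphabet $y+aW^{-1}-bW^{-1}$ degenerates to the genuine alphabet $y$ with the letter $y_i$ replaced by $e^{\zeta}y_i$. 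Summing the residues at the $W_i$ and passing to infinitely many variables gives
\[
\hat{\mathcal H}(z,u_0,\hbar)\,s_\lambda\!\left(\frac{q}{\hbar^{1/2}}\right)=\frac{e^{zu_0}(1-e^{-\zeta})}{s(\zeta)}\sum_i\Biggl(\prod_{j\ne i}\frac{1-e^{-\zeta}y_j/y_i}{1-y_j/y_i}\Biggr)\bigl(T_{e^{\zeta},y_i}s_\lambda\bigr)(y),
\]
where $T_{c,y_i}$ denotes the substitution $y_i\mapsto c\,y_i$.

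Finally, the operator on the right is, in $n$ variables, $e^{-\zeta(n-1)}$ times Macdonald's first order difference operator $D=\sum_i\bigl(\prod_{j\ne i}\frac{ty_i-y_j}{y_i-y_j}\bigr)T_{q,y_i}$ at $q=t=e^{\zeta}$, for which the Schur polynomial $s_\lambda$ (the $q=t$ specialization of the Macdonald polynomial) is an eigenfunction. At $q=t$ this can be proved in a few lines from the bialternant formula $s_\lambda=a_{\lambda+\delta}/a_\delta$, $\delta=(n-1,\dots,1,0)$: from $a_\delta=\prod_{k<l}(y_k-y_l)$ one checks $T_{t,y_i}a_\delta=\bigl(\prod_{j\ne i}\frac{ty_i-y_j}{y_i-y_j}\bigr)a_\delta$, hence $D\,s_\lambda=a_\delta^{-1}\sum_i T_{t,y_i}a_{\lambda+\delta}$, and expanding $a_{\lambda+\delta}=\sum_\sigma\mathrm{sgn}(\sigma)\prod_j y_{\sigma(j)}^{(\lambda+\delta)_j}$ the total rescaling factor $\sum_i t^{(\lambda+\delta)_{\sigma^{-1}(i)}}=\sum_m t^{(\lambda+\delta)_m}$ is independent of $\sigma$, so that $D\,s_\lambda=\bigl(t^{n}\sum_m t^{\lambda_m-m}\bigr)s_\lambda$. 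Substituting this back — the factor $e^{-\zeta(n-1)}$ combining with $t^n=e^{\zeta n}$ to leave $e^{\zeta}$, and $\lambda_m:=0$ for $m>l(\lambda)$ in the (legitimate, since $|t^{-1}|<1$) limit $n\to\infty$ — the eigenvalue becomes $\frac{e^{zu_0}(e^{\zeta}-1)}{s(\zeta)}\sum_{m\ge1}e^{\zeta(\lambda_m-m)}$; since $\frac{e^{\zeta}-1}{s(\zeta)}=\zeta e^{\zeta/2}$ and $\zeta=\hbar^{1/2}z$, this equals $\hbar^{1/2}z\,e^{zu_0}\sum_{m\ge1}e^{z\hbar^{1/2}(\lambda_m-m+1/2)}=E(z,\lambda,u_0,\hbar)$, which is \eqref{gener}.

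The delicate part is the residue computation: one has to interpret the Fourier integral as extraction of the coefficient of $W^0$ (the integrand genuinely has a pole at $W=0$ coming from $s_\lambda\bigl[\,y+aW^{-1}-bW^{-1}\,\bigr]$), justify deforming the contour and discarding the contribution at $W=\infty$ in the limit of infinitely many variables, and carry out the plethystic specialization at each pole (replacing, at $W=W_i$, the letter $y_i$ by $e^{\zeta}y_i$); everything after that is the classical computation of the eigenvalue of the $q=t$ Macdonald operator. An equivalent and perhaps more conceptual route, closer to the symplectic-field-theory origin of \eqref{generalh}, is to recognise the normally ordered exponential — after the same rescaling — as the fermion bilinear $:\psi(aW)\psi^*(bW):$ under the boson--fermion correspondence, the relation $ab=1$ being exactly what makes the annihilation parts of the two fields match; then $\hat{\mathcal H}(z,u_0,\hbar)$ acting on the fermionic state $|\lambda\rangle$ corresponding to $s_\lambda$ is manifestly diagonal, with eigenvalue $E(z,\lambda,u_0,\hbar)$.
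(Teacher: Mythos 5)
Your proposal is correct, but it reaches \eqref{gener} by a genuinely different route than the paper. The paper stays on the fermionic side throughout: it quotes Rossi's formula \eqref{rossi} expressing $\Phi^{-1}\hat{\mathcal H}\Phi$ as $e^{zu_0}\bigl(z\sum_k e^{kz}:\psi_k\psi_k^*:+\,1/s(z)\bigr)$, quotes the Alexandrov--Zabrodin identification \eqref{lambdabasis} of the states $|\lambda\rangle$ (in Frobenius coordinates) with $\pm s_\lambda(q)$, and then diagonalizes the charge operator ${\mathcal O}(z)=\sum_k e^{kz}:\psi_k\psi_k^*:$ on $|\lambda\rangle$ by a two-line anticommutator computation, obtaining the eigenvalue first in Frobenius form and converting it to the row form via \eqref{dva}. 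You instead work entirely bosonically: you write the normally ordered exponential as a vertex operator (Cauchy-kernel multiplication times a plethystic shift), extract the $W^0$ Fourier coefficient by residues --- where $ab=1$, i.e.\ the specific function $s(t)=\sinh(t/2)/(t/2)$, is exactly what collapses the virtual alphabet at each pole to the substitution $y_i\mapsto e^{\zeta}y_i$ --- and recognise the resulting difference operator as the $q=t$ Macdonald operator, whose Schur eigenvalue you rederive from the bialternant formula. I checked the residue bookkeeping (the discarded term at $W=\infty$ is $e^{-\zeta n}s_\lambda$, which for finite $n$ exactly compensates the truncation of the geometric tail, so the answer is $n$-independent for $n\ge l(\lambda)$ and the limit is harmless) and the final simplification $\frac{e^{\zeta}-1}{s(\zeta)}=\zeta e^{\zeta/2}$; both are right. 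What your approach buys is self-containedness --- it does not presuppose the fermionic representation \eqref{rossi} of the Hamiltonian, which the paper imports from \cite{rossi} without proof --- at the cost of more delicate formal manipulations (plethysm, contour deformation, the $n\to\infty$ limit). Your closing remark that the vertex operator is $:\psi(aW)\psi^*(bW):$ under boson--fermion correspondence is precisely the bridge to the paper's argument, so the two proofs are two faces of the same computation.
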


The sum in the right-hand side makes sense for any infinite sequence of nonnegative integers $\lambda_1\geq \lambda_2\geq\dots$ such that $\lambda_i=0$ for $i\gg 0$. Observe that the eigenvalue \eqref{gener} essentially coincides with the Chern character of the partition $\lambda$ introduced in \cite{mn}
$$
{ E}(z,\lambda, u_0, \hbar=1)=z\, {\rm ch}_\lambda(u_0, z).
$$

All Schur polynomials $s_\lambda(q)$ of degree $n=|\lambda|$ form a basis in the space $V_n\subset \mathbb C[q]$ of all graded homogeneous polynomials of degree $n$ in the variables $q_1$, \dots, $q_n$, where the degree is assigned according to the following rule
$$
\deg q_i=i, \quad i=1, \, 2, \dots.
$$
Thus, the theorem \ref{thm1} provides us with a complete family of eigenstates of the commuting quantum Hamiltonians.

\subsection{Applications to Symplectic Field Theory}

Symplectic Field Theory (SFT) is an approach to constructing topological invariants of contact manifolds and symplectic cobordisms between them by enumeration of pseudoholomorphic maps of open Riemann surfaces with certain boundary conditions. It also provides one with a surgery tool in computing Gromov--Witten (GW) invariants of closed symplectic manifolds by gluing them from simple blocks. 

With a compact contact\footnote{This assumption can be slightly relaxed, see \cite{e} for details.} manifold $V$ the SFT associates a bosonic Fock space ${\mathcal F}(V)$ generated by closed Reeb orbits on $V$, assuming them to be all nodegenerate. Choosing a closed odd-dimensional differential form $\theta$ on $V$ one obtains an infinite sequence of pairwise commuting quantum Hamiltonians $\hat H_k^V$ acting on the Fock space. For the particular case $V=S^1$ this gives \cite{private, e} the quantum Hamiltonians \eqref{generalh}, see more details in Section \ref{sft} below.

A symplectic cobordism $W$ between two contact manifolds $V_+$ and $V_-$, 
$$
\pal W=V_+ \cup (-V_-)
$$
defines an operator
\beq\label{canon}
Z^W_{\rm SFT}: {\mathcal F}(V_+)\to {\mathcal F}(V_-)
\eeq
called the total SFT potential of $W$. For a closed symplectic manifold $W$, $\pal W=\emptyset$, the SFT potential coincides with the total Gromov--Witten (GW) descendent potential of $W$.

In particular, to a symplectic manifold $W$ with the boundary $V$ the total SFT potential of $W$ defines a state $\Psi_{\rm SFT}^W$ in the Fock space ${\mathcal F}(V)$ (or in its dual, depending on the orientation of $V=\pm \pal W$). A suitable choice of an even-dimensional closed differential form $\phi=\theta\wedge d\rho$ on $W$ makes the SFT potential $\Psi_{\rm SFT}^W$ depend on an infinite number of coupling parameters $t_k^\phi$. The dependence on these parameters is determined from a system of pairwise commuting Schr\"odinger\footnote{Due to our normalization of canonical coordinates and Hamiltonians the $\sqrt{-1}$ does not appear in the equations.} equations
$$
\hbar \frac{\pal}{\pal t_k^\phi} \Psi_{\rm SFT}^W=\hat H_k^V \Psi_{\rm SFT}^W, \quad k=0, \,1,\dots.
$$ 
Our goal is to compute this SFT potential for the simplest example $V=S^1$, $W$=disk.

Denote
\beq\label{taylor}
{ E}(z,\lambda, u_0, \hbar)=1+\sum_{n\geq -1}{ E}_n(\lambda, u_0, \hbar) z^{n+2}
\eeq
the Taylor expansion of the eigenvalues \eqref{gener}, see the explicit formula \eqref{eigen} below.

\begin{thm} \label{thm2} The total SFT potential of a disk $\subset \mathbb C$ is given by the following formula
\begin{equation}\label{cap} 
\Psi_{\rm SFT}^{\rm disk} \left(u_0, {\bf t}, {\bf p}; \hbar\right)=\sum_\lambda \hbar^{-\frac{|\lambda|}2}\frac{\dim \lambda}{|\lambda|!} e^{\frac1{\hbar} \sum_{k\geq 0} t_k { E}_k(u_0, \hbar,\lambda)}  s_\lambda\left(\frac{p}{\hbar^{1/2}}\right).
\end{equation} 
\end{thm}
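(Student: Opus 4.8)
The plan is to verify that the right-hand side of \eqref{cap} satisfies both defining properties of the SFT potential of a disk: the Schr\"odinger equations
$$
\hbar \frac{\pal}{\pal t_k}\Psi^{\rm disk}_{\rm SFT}=\hat H_k^{S^1}\,\Psi^{\rm disk}_{\rm SFT},\quad k=0,1,\dots,
$$
together with the correct initial condition at ${\bf t}=0$. The Schr\"odinger part is where Theorem~\ref{thm1} does all the work. Write $\Psi^{\rm disk}_{\rm SFT}=\sum_\lambda c_\lambda({\bf t})\,s_\lambda(p/\hbar^{1/2})$ with $c_\lambda({\bf t})=\hbar^{-|\lambda|/2}\frac{\dim\lambda}{|\lambda|!}\exp\!\big(\tfrac1\hbar\sum_{k\ge0}t_k E_k(u_0,\hbar,\lambda)\big)$. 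Since $\hat H_k^{S^1}$ acts in the $p$-variables exactly as $\hat H_k$ acts in the $q$-variables of Theorem~\ref{thm1} (the Fock space $\mathcal F(S^1)$ is $\mathbb C[p_1,p_2,\dots]$), we have $\hat H_k^{S^1}s_\lambda(p/\hbar^{1/2})=E_k(u_0,\hbar,\lambda)\,s_\lambda(p/\hbar^{1/2})$, while $\hbar\,\pal_{t_k}c_\lambda=E_k(u_0,\hbar,\lambda)\,c_\lambda$ by construction of the exponential factor. Hence each term of the sum is an eigenvector with matching eigenvalues and the Schr\"odinger equations hold term by term. Completeness of $\{s_\lambda\}$ (the remark after Theorem~\ref{thm1}) guarantees this is the general solution up to a ${\bf t}$-independent ``integration constant'' in each $\lambda$-sector.

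That constant is fixed by the ${\bf t}=0$ value, which must be the SFT potential of the disk with no extra insertions from $\phi$. The key geometric input here is that pseudoholomorphic disks in $\mathbb C$ with boundary on $S^1$, counted with the appropriate descendent/branch conditions, reproduce the classical Hurwitz-type generating function: the count of genus-zero covers of the disk branched over one interior point is governed by $\dim\lambda/|\lambda|!$, the normalized dimension of the irreducible $S_{|\lambda|}$-representation. I would invoke the standard SFT/GW computation for the cap (e.g.\ via the localization or the string/dilaton structure of the disk potential, as in the references to Eliashberg's work and \cite{rossi}) to identify
$$
\Psi^{\rm disk}_{\rm SFT}\big|_{{\bf t}=0}=\sum_\lambda \hbar^{-|\lambda|/2}\frac{\dim\lambda}{|\lambda|!}\,s_\lambda(p/\hbar^{1/2}),
$$
which matches \eqref{cap} at ${\bf t}=0$. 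Equivalently, one can characterize this state abstractly as the unique solution of $\hat H_0^{S^1}\Psi=0$ lying in the image of the disk vacuum under the SFT gluing that is normalized by the empty-surface contribution $1$ in degree $0$; expanding such a state in Schur polynomials and using $\hat H_0^{S^1}s_\lambda=E_0(\lambda,u_0,\hbar)s_\lambda$ pins down the coefficients up to the overall Hurwitz weights, which are then read off from the disk with a single branch point.

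The main obstacle is the second step, not the first: the Schr\"odinger verification is essentially a restatement of Theorem~\ref{thm1}, but identifying the initial value $\Psi^{\rm disk}_{\rm SFT}|_{{\bf t}=0}$ with $\sum_\lambda \hbar^{-|\lambda|/2}(\dim\lambda/|\lambda|!)\,s_\lambda(p/\hbar^{1/2})$ requires genuine input about the moduli spaces of holomorphic curves in the symplectization of $S^1$ and in the disk, in particular control of the descendent insertions attached to $\phi=\theta\wedge d\rho$ and the combinatorics of branched covers. I would handle this either by direct comparison with the known genus-zero GW/Hurwitz theory of $\mathbb P^1$ (relative to a point) restricted to the disk, using the classical fact that Schur polynomials diagonalize the cut-and-join operator and that $\dim\lambda/|\lambda|!$ are the corresponding Hurwitz numbers, or by citing the SFT computation of the cap operator already available in the literature and translating it into the Fock-space normalization used here. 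Once the initial condition is matched, uniqueness for the linear system of Schr\"odinger equations completes the proof.
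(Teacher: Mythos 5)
Your proposal is correct and follows essentially the same route as the paper: solve the commuting Schr\"odinger equations by diagonalizing in the Schur basis via Theorem \ref{thm1}, then fix the constants by the ${\bf t}=0$ value. The only difference is cosmetic: the paper takes the initial condition in the simpler form of the plane wave $e^{p_1/\hbar}$ (the disk with no insertions contributes only trivial degree-one disks) and then obtains your Schur-expanded expression purely combinatorially from the identity $q_1^n=\sum_{|\lambda|=n}\dim\lambda\cdot s_\lambda$, so the geometric input you flag as the main obstacle is lighter than you suggest.
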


\medskip

Collecting in \eqref{cap} partitions of a given number $n=|\lambda|$ one obtains a part of the SFT potential corresponding to maps of Riemann surfaces with cylindrical ends of a given degree $n$. For low degrees $|\lambda|\leq 3$ keeping the variables $t_k$ for $k\leq 3$ one obtains (for simplicity we set $u_0=0$; recall that this parameter is responsible for adding punctures on the Riemann surface with no cohomology classes inserted)
\begin{eqnarray}
&&
\Psi_{\rm SFT}^{\rm disk}=e^{-\frac{ t_0}{24}+\frac{7 \hbar\, t_2}{5760}}\left[1+\frac1{\hbar}e^{t_0 +\frac{\hbar\, t_2}{24}} p_1
\right.
\nonumber\\
&&
+\frac1{2\hbar^2} \left( e^{2 t_0 + \hbar^{\frac12} t_1 +\frac{7 \hbar\, t_2}{12} +\frac{5}{24}\hbar^{\frac32} t_3} (p_1^2 + \hbar^{\frac12} p_2)
+e^{2 t_0 - \hbar^{\frac12} t_1 +\frac{7 \hbar\, t_2}{12} -\frac{5}{24}\hbar^{\frac32} t_3} (p_1^2 - \hbar^{\frac12} p_2)\right)
\nonumber\\
&&
+\frac1{36 \hbar^3} \left( e^{3 t_0 + 3\hbar^{\frac12} t_1 +\frac{21}8 \hbar\, t_2 +\frac{13}8 \hbar^{\frac32} t_3} (p_1^3 + 3\hbar^{\frac12} p_1 p_2 + 2\hbar \, p_3)+4 e^{3 t_0 +\frac98 \hbar\, t_2}(p_1^3-\hbar\, p_3)
\right.
\nonumber\\
&&\left.\left.
+e^{3 t_0 - 3\hbar^{\frac12} t_1 +\frac{21}8 \hbar\, t_2 -\frac{13}8 \hbar^{\frac32} t_3} (p_1^3 - 3\hbar^{\frac12} p_1 p_2 + 2\hbar \, p_3)
\right)
\right]+\dots
\nonumber
\end{eqnarray}
It is not difficult to see that the resulting expansion contains only integer powers of Planck constant due to the following well known properties (see, e.g., \cite{mac})
$$
s_{\lambda'}(p) =(-1)^{|\lambda|} s_\lambda(-p)
$$
$$
\dim\lambda'=\dim\lambda.
$$
Here $\lambda'$ is the transposed Young tableau.

\begin{rem} Setting in \eqref{cap} $u_0=0$, $t_1=\beta$, $t_k=0$ for $k\neq 1$ one obtains the I.P.~Goulden and D.M.~Jackson generating function for Hurwitz numbers \cite{gj} (see also \cite{vak}) in the representation of M.~Kazarian and S.~Lando \cite{kl}.
\end{rem}

\begin{cor} \label{cor1.6} The tau-function $\tau({\bf p}, \epsilon)=\Psi_{\rm SFT}^{\rm disk}({\bf t}, {\bf p}, u_0,\epsilon^2)$ for arbitrary values of the parameters $u_0$ and ${\bf t}=(t_0, t_1, \dots)$ satisfies equations of the KP hierarchy
\beq\label{kp}
\sum_{j\geq 0} h_j(-2{\bf y}) h_{j+1}(\epsilon\, \tilde D) e^{\epsilon\sum_{k\geq 1} y_k D_k}\tau\cdot\tau=0 \quad {\rm for ~all}~{\bf y}=(y_1, y_2, \dots).
\eeq
Here
$$
D=(D_1, D_2, \dots)
$$
are the Hirota bilinear operators,
$$
D_k\,\tau\cdot\tau=\left[\frac{\pal}{\pal q} \tau(p_k+q)(\tau(p_k-q)\right]_{q=0},
$$
$\tilde D=(D_1, 2D_2, 3 D_3, \dots)$.
\end{cor}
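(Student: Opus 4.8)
The plan is to recognise the right‑hand side of \eqref{cap}, regarded as a function of the variables ${\bf p}$, as a \emph{hypergeometric} (Orlov--Scherbin) tau‑function of the KP hierarchy, for which the bilinear equations \eqref{kp} are a classical fact. Write \eqref{cap} as $\Psi_{\rm SFT}^{\rm disk}=\sum_\lambda c_\lambda\,s_\lambda({\bf p}/\hbar^{1/2})$ and use two standard facts. (i) For the one‑variable (exponential) specialisation $p^\ast=(\hbar^{-1/2},0,0,\dots)$ one has $s_\lambda(p^\ast)=\hbar^{-|\lambda|/2}\,\frac{\dim\lambda}{|\lambda|!}$, and by the Cauchy identity $\sum_\lambda s_\lambda(p^\ast)\,s_\lambda({\bf p}/\hbar^{1/2})=e^{p_1/\hbar}$. (ii) For any function $r$ the series
\[
\sum_\lambda\Bigl(\prod_{(i,j)\in\lambda}r(j-i)\Bigr)\,s_\lambda(p^\ast)\,s_\lambda(p),
\]
the product being over the cells of the Young diagram of $\lambda$ with content $j-i$, is a tau‑function of the KP hierarchy in the variables $p$; in the boson--fermion correspondence it is the $GL(\infty)$ matrix element $\langle0|\,\Gamma_+(p)\,\mathcal D_r\,\Gamma_-(p^\ast)\,|0\rangle$, where the operator $\mathcal D_r$ diagonal in the basis $|\lambda\rangle$ with eigenvalue $\prod_{(i,j)\in\lambda}r(j-i)$ is the exponential of a one‑body fermion bilinear $\sum_n\varrho(n)\,{:}\psi_n\psi_n^\ast{:}$ (for a suitable $\varrho$) and hence lies in $GL(\infty)$.

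The substantive step is to bring $c_\lambda$ to the form of (ii), i.e.\ to show that up to a nonzero constant the weight is a content product. For this I would use the elementary identity
\[
\sum_{i\ge1}\bigl(q^{\lambda_i-i+\frac12}-q^{-i+\frac12}\bigr)=\bigl(q^{1/2}-q^{-1/2}\bigr)\sum_{(i,j)\in\lambda}q^{j-i},
\]
valid for every partition $\lambda$ (an easy induction on $|\lambda|$, or the Maya‑diagram picture of $\lambda$). Putting $q=e^{z\hbar^{1/2}}$ and using $q^{1/2}-q^{-1/2}=z\hbar^{1/2}\,s(z\hbar^{1/2})$, the eigenvalues \eqref{gener} become
\[
{E}(z,\lambda,u_0,\hbar)={E}(z,\emptyset,u_0,\hbar)+\hbar\,z^2\,e^{z u_0}\,s\bigl(z\hbar^{1/2}\bigr)\sum_{(i,j)\in\lambda}e^{z\hbar^{1/2}(j-i)},
\]
the first term ${E}(z,\emptyset,u_0,\hbar)=e^{zu_0}/s(z\hbar^{1/2})$ being the eigenvalue for the empty partition. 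Taylor‑expanding in $z$ and pairing with the couplings gives $\frac1\hbar\sum_{k\ge0}t_k{E}_k(u_0,\hbar,\lambda)=A+\sum_{(i,j)\in\lambda}\phi(j-i)$ with a constant $A=A(u_0,{\bf t},\hbar)$ and an explicit polynomial $\phi$ (in its argument, in $u_0$ and $\hbar^{1/2}$, depending linearly on the $t_k$). Together with (i) this yields
\[
c_\lambda=e^{A}\,s_\lambda(p^\ast)\prod_{(i,j)\in\lambda}r(j-i),\qquad r(c)=e^{\phi(c)},
\]
so $\sigma(P):=\sum_\lambda c_\lambda\,s_\lambda(P)$ equals $e^A$ times the hypergeometric series of (ii) and is a KP tau‑function (the nonzero constant $e^A$ is irrelevant for the Hirota equations). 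Equivalently, $\Psi_{\rm SFT}^{\rm disk}$ arises from the trivial tau‑function $e^{p_1/\hbar}$ by the action of the $GL(\infty)$ element $\exp\bigl(\frac1\hbar\sum_{k\ge0}t_k\hat H_k\bigr)$, which is diagonal on Schur polynomials by Theorem~\ref{thm1}.

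It remains to account for the rescaling of times built into \eqref{cap}. Since $\Psi_{\rm SFT}^{\rm disk}(u_0,{\bf t},{\bf p};\hbar)=\sigma({\bf p}/\hbar^{1/2})$, equivalently $\sigma(P)=\Psi_{\rm SFT}^{\rm disk}(\hbar^{1/2}P)$, the derivative $\partial_{P_k}$ acting on $\sigma$ corresponds to $\hbar^{1/2}$ times the derivative in the $k$‑th argument of $\Psi_{\rm SFT}^{\rm disk}$; hence each Hirota operator $D_k$ in the standard bilinear identity $\sum_{j\ge0}h_j(-2{\bf y})\,h_{j+1}(\tilde D)\,e^{\sum_k y_kD_k}\,\sigma\cdot\sigma=0$ is replaced by $\hbar^{1/2}D_k$, and since $\tilde D=(D_1,2D_2,3D_3,\dots)$ this is exactly \eqref{kp} for $\tau=\Psi_{\rm SFT}^{\rm disk}$ with $\epsilon^2=\hbar$. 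Because $u_0$ and ${\bf t}$ enter only through $A$, $\phi$ and $p^\ast$, the conclusion holds for all their values. The only real work is in the middle step: proving the content identity and, above all, verifying that after absorbing the $\lambda$‑independent piece ${E}(z,\emptyset,u_0,\hbar)$ and the power of $\hbar$ the weight $c_\lambda$ is genuinely a product over the cells of a function of the content alone (rather than depending on the shape through, say, hook lengths); granting this, the tau‑function property is the standard theorem on hypergeometric KP tau‑functions and the $\epsilon$‑bookkeeping is routine.
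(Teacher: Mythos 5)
Your argument is correct, but it reaches the Hirota equations by a different route than the paper. The paper's proof is the ``Pl\"ucker frame'' version of decomposability: it exhibits an explicit basis $\varphi_\nu=\sum_{i\geq 0}A_{\nu,i}\,e_{\nu+i}$ of a point of the Sato Grassmannian, with
$A_{\nu,i}=\frac{1}{i!}\exp\bigl\{\hbar^{\frac12}\sum_{k\geq 0}t_k\frac{[u_0+\hbar^{1/2}(\nu+i)]^{k+1}-[u_0+\hbar^{1/2}\nu]^{k+1}}{(k+1)!}\bigr\}$,
asserts (citing Kazarian) that the bosonization of $\varphi_{-1/2}\wedge\varphi_{-3/2}\wedge\dots$ reproduces \eqref{cap}, and invokes the general theorem that simple semi-infinite wedges give KP tau-functions. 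You instead recognize \eqref{cap} as a hypergeometric (Orlov--Scherbin) tau-function, the key lemma being the content identity that turns the eigenvalue $E(z,\lambda,u_0,\hbar)$ into its $\lambda=\emptyset$ value plus a sum of a fixed function over the cells of $\lambda$, so that the Schur coefficient becomes $e^{A}s_\lambda(p^\ast)\prod_{(i,j)\in\lambda}r(j-i)$; your computation of this step checks out (note that $tz\,s(t)\,$-type identities and the specialization $s_\lambda(p^\ast)=\hbar^{-|\lambda|/2}\dim\lambda/|\lambda|!$ are all consistent with the paper's normalizations, and the $\epsilon$-rescaling of the Hirota operators matches the $\tilde D$ in \eqref{kp}). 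The two proofs are two faces of the same underlying fact: the exponential in $A_{\nu,i}$ is precisely your content-telescoped weight, and the determinant expansion of the wedge product that the paper leaves implicit (``one obtains the expression \eqref{cap}'') is exactly the factorization $e^{A}s_\lambda(p^\ast)\prod r(j-i)$ you establish. What each buys: the paper's route is shorter on the page but defers a nontrivial determinant evaluation to the reader; yours makes the multiplicative-over-cells structure of the weight explicit, at the cost of proving the content identity and quoting the hypergeometric-tau theorem. Your parenthetical remark is in fact the most economical complete argument: by \eqref{rossi} each $\hat H_k$ is a diagonal fermion bilinear, so $\exp\bigl(\hbar^{-1}\sum_k t_k\hat H_k\bigr)$ lies in $\widehat{GL(\infty)}$ and carries the decomposable vector corresponding to the trivial tau-function $e^{p_1/\hbar}$ to another decomposable vector, which bypasses the content identity entirely.
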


Recall that, expanding eqs. \eqref{kp} with respect to the indeterminates $y_1$, $y_2$, etc. one obtains an infinite family of bilinear equations for tau-function of the KP hierarchy. E.g., the first two equations read
\begin{align*}
&
(12 D_2^2 -12 D_1 D_3 +\hbar\, D_1^4)\,\tau\cdot\tau=0
\\
&
(6 D_2 D_3 -6 D_1 D_4 + \hbar\, D_1^3 D_2)\,\tau\cdot\tau=0
\end{align*}
(recall: here and below $\epsilon^2 =\hbar$).
The second logarithmic derivative
$$
u=\epsilon^2 \pal_x^2 \log\tau
$$
as function of $x=p_1$, $y=p_2$, $t=p_3$ satisfies the Kadomtsev--Petviashvili equation
\beq\label{kp1}
u_{xt}=u_{yy} +\left( u\, u_x +\frac{\hbar}{12}u_{xxx}\right)_x.
\eeq 
\medskip

Following the general scheme of \cite{egh} one can use the SFT potential of the disk for computation of the stationary sector (in the terminology of \cite{op}) of Gromov--Witten theory of projective line ${\bf P}^1$. Namely,

\begin{cor} \label{cor1.7} Denote $\overline{M}_{g,n}^\bullet({\bf P}^1, d)$ the space\footnote{Here we follow notations of \cite{op}.} of degree $d$ stable maps to ${\bf P}^1$  of not necessarily connected algebraic curves of the total genus $g$ with $n$ punctures. Let 
$$
\chi: \overline{M}_{g,n}^\bullet({\bf P}^1, d)\to \mathbb Z
$$ 
be a locally constant function taking value $\chi=2g-2k$ on the subset of stable maps of $k$-component curves. Then
\begin{eqnarray}\label{final}
&&
Z_{{\bf P}^1}^\bullet({\bf t}, z, \hbar):=
\nn\\
&&
\sum_{g\geq 0} \sum_{d\geq 0} z^d \sum_{m,\, n} \frac{u_0^m}{m!}  \sum_{k_1, \dots, k_n}\frac{t_{k_1}\dots t_{k_n}}{n!} \int\limits_{\left[\overline{M}_{g,n+m}^\bullet({\bf P}^1, d)\right]^{\rm virt}}  \hbar^{\frac{\chi}2} {\rm ev}_1^*(\omega)\dots {\rm ev}_n^*(\omega) \psi_1^{k_1}\dots \psi_n^{k_n} 
\nonumber\\
&&
=\sum_\lambda \left( \frac{z}{\hbar}\right)^{|\lambda|}\left( \frac{\dim\lambda}{|\lambda|!}\right)^2 e^{\frac1{\hbar} \sum_{k\geq 0} t_k { E}_k (\lambda, u_0, \hbar)}.
\end{eqnarray}
\end{cor}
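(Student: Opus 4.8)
The plan is to obtain ${\bf P}^1$ by gluing two disks along their common boundary circle and to substitute the disk potential \eqref{cap} into the SFT composition formula of \cite{egh}. Write ${\bf P}^1 = D_+\cup_{S^1} D_-$, the equatorial $S^1$ carrying its standard Reeb dynamics, and represent the point class $\omega\in H^2({\bf P}^1)$ by a form $\phi=\theta\wedge d\rho$ supported in a collar of the equator. In the neck-stretching limit the descendent insertions $\psi_i^{k_i}\,{\rm ev}_i^*(\omega)$ turn into the coupling to the $t_k$-deformed $S^1$-Hamiltonians $\hat H_k$ of \eqref{generalh} sitting on the neck, while the $u_0$-punctures are absorbed, as in \cite{egh}, into the $u_0$-dependence of those Hamiltonians. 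The composition formula then reads
$$
Z^\bullet_{{\bf P}^1}({\bf t},z,\hbar)=\Big\langle\,\Psi^{\rm disk}_{\rm SFT}(u_0,0,{\bf p};\hbar)\ ,\ e^{\frac1\hbar\sum_{k\geq0}t_k\hat H_k}\,\Psi^{\rm disk}_{\rm SFT}(u_0,0,{\bf p};\hbar)\,\Big\rangle,
$$
where $\Psi^{\rm disk}_{\rm SFT}(u_0,0,{\bf p};\hbar)\in{\mathcal F}(S^1)=\mathbb C[p_1,p_2,\dots]$ is the bare ($t_k=0$) disk state, the contribution of $D_-$ enters through the dual Fock space via the canonical pairing $\langle\,\cdot\,,\,\cdot\,\rangle$ that counts trivial cylinders over $S^1$ (equivalently, $\hat p_k^\dagger=\hat q_k$), and the prefactor $z^d$ keeps track of the degree.

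By \eqref{cap} with all $t_k=0$ one has $\Psi^{\rm disk}_{\rm SFT}(u_0,0,{\bf p};\hbar)=\sum_\lambda\hbar^{-|\lambda|/2}\frac{\dim\lambda}{|\lambda|!}\,s_\lambda(p/\hbar^{1/2})$, and by Theorem \ref{thm1} every $s_\lambda(p/\hbar^{1/2})$ is a common eigenvector of all $\hat H_k$, with eigenvalue $E_k(\lambda,u_0,\hbar)$ of \eqref{taylor}; thus $e^{\frac1\hbar\sum t_k\hat H_k}$ acts diagonally and simply restores the full disk potential \eqref{cap} of Theorem \ref{thm2}. The remaining ingredient is the value of the Fock pairing on Schur polynomials. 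The canonical pairing obeys $\langle q^\alpha,q^\beta\rangle=\delta_{\alpha\beta}\prod_k(\hbar k)^{\alpha_k}\alpha_k!$, which on the homogeneous component $V_n$ equals $\hbar^n$ times the Hall inner product — in which the $q_k$ are the power sums, so that $\langle s_\lambda,s_\mu\rangle_{\rm Hall}=\delta_{\lambda\mu}$. Since $s_\lambda$ is homogeneous of degree $|\lambda|$ for $\deg q_i=i$, substituting $q\mapsto q/\hbar^{1/2}$ multiplies $s_\lambda$ by $\hbar^{-|\lambda|/2}$, and we obtain $\langle s_\lambda(p/\hbar^{1/2}),s_\mu(p/\hbar^{1/2})\rangle=\hbar^{-|\lambda|/2-|\mu|/2}\hbar^{|\lambda|}\delta_{\lambda\mu}=\delta_{\lambda\mu}$.

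Putting the three facts together collapses the double sum:
$$
Z^\bullet_{{\bf P}^1}=\sum_{\lambda,\mu}z^{|\lambda|}\,\hbar^{-\frac{|\lambda|}{2}}\frac{\dim\lambda}{|\lambda|!}\,\hbar^{-\frac{|\mu|}{2}}\frac{\dim\mu}{|\mu|!}\,e^{\frac1\hbar\sum_k t_kE_k(\mu,u_0,\hbar)}\,\delta_{\lambda\mu}=\sum_\lambda\Big(\frac z\hbar\Big)^{|\lambda|}\Big(\frac{\dim\lambda}{|\lambda|!}\Big)^2 e^{\frac1\hbar\sum_k t_kE_k(\lambda,u_0,\hbar)},
$$
which is \eqref{final}. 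Here one uses that a degree-$d$ cover has ramification profile over the equatorial point a partition of $d$, so the $\lambda$-term carries $z^d=z^{|\lambda|}$, and that the weights $\hbar^{\chi/2}$ are additive under gluing along the matching cylinders (which have $\chi=0$), reproducing the powers of $\hbar$ recorded above.

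I expect the genuine obstacle to sit in the first paragraph, not in the algebra: identifying the Gromov--Witten descendent and $u_0$-insertions on ${\bf P}^1$ with the neck Hamiltonian flow rests on the full SFT gluing and compactness machinery of \cite{egh} — transversality for disks with cylindrical ends, independence of the choice of the collar form $\phi$, and the careful bookkeeping of orientations and of the $\hbar$-grading on the two caps, one of which lies in the dual Fock space. Once that dictionary is in place, the remaining steps — diagonalization via Theorem \ref{thm1} and orthonormality of Schur polynomials under the Fock pairing — are routine, and they are exactly what produces the squared factor $(\dim\lambda/|\lambda|!)^2$ and the single exponential in \eqref{final}.
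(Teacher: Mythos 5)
Your overall strategy is the paper's: glue two caps over the equatorial $S^1$, let the $t_k$-flow generated by the commuting Hamiltonians act on the neck, and diagonalize in the Schur basis. The paper phrases the gluing as $Z^\bullet_{{\bf P}^1}=\Psi^{\rm disk}_{\rm SFT}(u_0,{\bf t},\overrightarrow{\bf p};\hbar)\,e^{zq_1/\hbar}\big|_{q=0}$, i.e.\ it keeps the second cap in the bare form $e^{zq_1/\hbar}$ and then needs only the single evaluation $s_\lambda(\overrightarrow{\bf p})\,q_1^n|_{q=0}=\hbar^n\dim\lambda$ for $|\lambda|=n$, which follows from \eqref{form2} alone. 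You instead expand both caps in Schur polynomials and appeal to orthonormality under the Fock pairing. That is an equivalent organization of the same computation, but it requires a stronger lemma, and your derivation of that lemma is incorrect as written.

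Concretely: with $\overrightarrow{p}_k=\hbar k\,\partial/\partial q_k$ one has $\langle q^\alpha,q^\beta\rangle=\delta_{\alpha\beta}\prod_k(\hbar k)^{\alpha_k}\alpha_k!=\delta_{\alpha\beta}\,\hbar^{\ell(\alpha)}z_\alpha$, where $\ell(\alpha)=\sum_k\alpha_k$ is the number of factors of the monomial and $z_\alpha=\prod_k k^{\alpha_k}\alpha_k!$. Since $\ell(\alpha)\neq n=\sum_k k\,\alpha_k$ in general, the restriction of this pairing to $V_n$ is \emph{not} $\hbar^n$ times the Hall product (compare $\langle q_2,q_2\rangle=2\hbar$ with $\langle q_1^2,q_1^2\rangle=2\hbar^2$). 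Likewise, the substitution $q_k\mapsto q_k/\hbar^{1/2}$ multiplies the monomial $q^\alpha$ by $\hbar^{-\ell(\alpha)/2}$, not $s_\lambda$ by $\hbar^{-|\lambda|/2}$: for instance $s_{(2)}(q/\hbar^{1/2})=\frac{q_1^2}{2\hbar}+\frac{q_2}{2\hbar^{1/2}}$, and this is forced, because the eigenvector of the cut-and-join operator $\hat H_1$ in degree $2$ is $q_1^2+\hbar^{1/2}q_2$ rather than $q_1^2+q_2$. Your two false intermediate claims happen to cancel: writing $s_\lambda(q)=\sum_\alpha c_{\lambda\alpha}q^\alpha$, the correct computation is $\langle s_\lambda(p/\hbar^{1/2}),s_\mu(q/\hbar^{1/2})\rangle=\sum_\alpha c_{\lambda\alpha}c_{\mu\alpha}\,\hbar^{-\ell(\alpha)/2}\hbar^{-\ell(\alpha)/2}\hbar^{\ell(\alpha)}z_\alpha=\langle s_\lambda,s_\mu\rangle_{\rm Hall}=\delta_{\lambda\mu}$, so the orthonormality you need is in fact true and \eqref{final} follows. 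But the argument as stated would equally ``prove'' orthonormality of $\hbar^{-|\lambda|/2}s_\lambda(q)$, which is false for $\hbar\neq1$. Either repair the lemma as above, or pair against the bare cap $e^{zq_1/\hbar}$ as the paper does, using only the leading-coefficient formula \eqref{form2}. The genuinely geometric input --- the SFT gluing formula and the identification of stationary GW insertions on ${\bf P}^1$ with the neck Hamiltonians --- you correctly flag as the substantive, and here unproved, step; the paper likewise imports it from \cite{egh} and \cite{op}.
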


In this formula $\omega\in H^2({\bf P}^1)$ is the area form normalized by $\int\limits_{{\bf P}^1}\omega=1$.

Specializing \eqref{final} at $u_0=0$, $\hbar=1$ and collecting the terms of degree $d$ one arrives at eq. (0.25) of the A. Okounkov and R. Pandharipande paper \cite{op} playing an important role in their analysis of GW/Hurwitz correspondence.

\medskip

The paper is organised as follows. In Section \ref{sec1} we recall basics about the so-called boson-fermion correspondence. We apply this technique for proving Theorem \ref{thm1}. In Section \ref{sft} we very briefly remind main definitions of Symplectic Field Theory and prove  Theorem \ref{thm2} as well as Corollaries \ref{cor1.6} and \ref{cor1.7}.

\section{Boson-fermion correspondence and proof of Theorem \ref{thm1}}\label{sec1}

For simplicity let us do the calculations setting $\hbar= 1$. The original normalization will be restored at the end of the proof.

We will use the so-called boson-fermion correspondence. Let us briefly remind the main step of this construction; the reader can find more details in \cite{jmd}, \cite{az}. By definition the algebra of free bosons has generators\footnote{Hats over the bosonic operators will be omitted in this section.} $q_n$, $p_n$, $n\geq 1$, satisfying commutation relations
\begin{equation}\label{bose}
[p_n, q_m]=n\, \delta_{n,m}, \quad [p_n, p_m]=[q_n, q_m]=0.
\end{equation}
It admits a natural representation on the bosonic Fock space 
\begin{align*}
&
\mathbb C[q_1, q_2, \dots]\,{\hat{}}=\bigoplus_{n=0}^\infty V_n
\\
&
V_n =\{ f\in \mathbb C[q_1, \dots, q_n]\, | \, \deg f=n\}, \quad \mbox{assuming}\quad\deg q_i=i.
\end{align*}
where the generator $q_n$ acts by multiplication by $q_n$ and $p_n$ by derivation
$$
p_n=n\,\frac{\partial}{\partial q_n}.
$$
The algebra of free fermions has two infinite sequences of generators $\psi_k$ and $\psi_k^*$ labeled by half-integers. The defining relations are given in terms of anticommutators
$$
\{ a, b\}=ab+ba.
$$
They read
\begin{equation}\label{fermi}
\{ \psi_i, \psi_j^*\}=\delta_{ij}, \quad \{\psi_i, \psi_j\}=\{\psi^*_i, \psi_j^*\}=0.
\end{equation}
Fermionic Fock space ${\mathcal F}$ is spanned by vectors obtained by action of finite products of the generators on the vacuum vector $|\,0>$. It is assumed that
$$
\psi_{-k}|\,0>=0, \quad \psi_k^*|\,0>=0 \quad \mbox{for}\quad k>0.
$$
The generators $\psi_{-k}$ and $\psi^*_k$ for $k>0$ are called annihilation operators; the generators $\psi_k$ and $\psi_{-k}^*$ are creation operators. For any product of the fermionic generators the normal order is defined by moving all annihilation operators on the right. For example, the normal ordering  of the product $\psi_i \psi_i^*$
is
$$
:\psi_i \psi_i^*:=\left\{ \begin{array}{rc}\psi_i \psi_i^*, & i>0\\
-\psi_i^* \psi_i, & i<0\end{array}\right. .
$$

Vectors of charge zero in the fermionic Fock space can be written as linear combinations of vectors of the form
$$
\psi_{i_1}\dots \psi_{i_n} \psi_{j_1}^* \dots \psi_{j_n}^*|\,0>
$$
for an arbitrary integer $n\geq 0$. The subspace in the Fock space spanned by these vectors will be denoted ${\mathcal F}_0$. The dual space ${\mathcal F}^*$ is obtained by a similar action on the bra-vector $<0\,|$ assuming that
$$
0=<0\,| \psi_k, \quad 0= <0\,|\psi_{-k}^*\quad \mbox{for}\quad k>0.
$$
The pairing ${\mathcal F}^*\times {\mathcal F}\to\mathbb C$ is completely defined by the anticommutation relations \eqref{fermi} along with the normalization
$$
<0\,|\, 0>=1.
$$

Boson-fermion correspondence is an isomorphism of linear spaces 
$$
\Phi: {\mathcal F}_0\to \mathbb C[q_1, q_2, \dots]\,\hat{}
$$
given by the formula
\begin{eqnarray}\label{bofe}
&&
\Phi(\xi\,|\,0>)=<0\,| e^{K(q)} \xi \, |\,0>
\\
&&
K(q) =\sum_{n=1}^\infty \frac{q_n}{n} \sum_j : \psi_j \psi_{j+n}^*:
\nonumber
\end{eqnarray}
for any state $\xi\, |\,0>\in{\mathcal F}_0$.

The following formulae hold true
\begin{eqnarray}\label{psitoqp}
&&
e^{K(q)} \psi_i e^{-K(q)} =\psi_i+h_1(q) \psi_{i-1} +h_2(q) \psi_{i-2}+\dots
\nonumber\\
&&
\\
&&
e^{K(q)} \psi_i^* e^{-K(q)} =\psi_i^*+h_1(-q) \psi_{i+1}^* +h_2(-q) \psi_{i+2}^*+\dots
\nonumber
\end{eqnarray}
Here $h_k(q)$ are elementary Schur polynomials defined in \eqref{elem}.

Due to the isomorphism \eqref{bofe} any operator on ${\mathcal F}_0$ becomes an operator on the bosonic Fock space $\mathbb C[q_1, q_2, \dots]\,\hat{}\,$ and vice versa. For example,
the pull-back of the operators $q_n$ and $p_n$ acting on the bosonic Fock space is given by the following infinite sums
\begin{eqnarray}
&&
\Phi^{-1} q_n\, \Phi =\sum_{j\in\mathbb Z+\frac12} : \psi_j \psi_{j-n}^*:
\nonumber\\
&&
\Phi^{-1} p_n \,\Phi =\sum_{j\in\mathbb Z+\frac12} : \psi_j \psi_{j+n}^*:
\nonumber
\end{eqnarray}

The following statement was proved by P. Rossi in \cite{rossi}.

\begin{prop} The fermionic representation of the Hamiltonian \eqref{generalh} reads
\begin{equation}\label{rossi}
\Phi^{-1} \hat{\mathcal H}(z, u_0, \hbar=1)\, \Phi = e^{z\, u_0} \left( z\sum_{k\in\mathbb Z+\frac12} e^{k\, z} : \psi_k \psi_k^*: +\frac1{s(z)}\right).
\end{equation}
\end{prop}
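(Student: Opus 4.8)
The plan is to push everything through the boson--fermion correspondence $\Phi$ and to recognise $:e^{z\,s(iz\partial_x)u(x)}:$ (I set $\hbar=1$ with the paper) as a renormalised product of the two fermionic vertex fields $\psi,\psi^*$ evaluated at a pair of points related by a fixed dilation, after which integrating out $x$ is immediate. Write $w=e^{ix}$ and put
\[
\psi(w)=\sum_{k\in\mathbb Z+\frac12}\psi_k\,w^{\,k-\frac12},\qquad
\psi^*(w)=\sum_{l\in\mathbb Z+\frac12}\psi^*_l\,w^{-l+\frac12},
\]
so that $:\psi(w)\psi^*(w):\,=\sum_{n}J_n\,w^{\,n}$ with $J_n=\sum_k:\psi_k\psi^*_{k-n}:$, and the formulas for $\Phi^{-1}q_n\Phi$ and $\Phi^{-1}p_n\Phi$ recorded above become $\Phi^{-1}q_n\Phi=J_n$, $\Phi^{-1}p_n\Phi=J_{-n}$ for $n\ge1$ (here $J_0$ is the charge operator, which kills $\mathcal F_0$).

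\textbf{Step 1.} Since $s(iz\partial_x)e^{inx}=s(zn)e^{inx}$ and $z\,s(zn)=\bigl(e^{zn/2}-e^{-zn/2}\bigr)/n$, the exponent $z\,s(iz\partial_x)u(x)$ occurring in \eqref{generalh} is carried by $\Phi^{-1}$ to
\[
z\,u_0+\sum_{n\neq0}\frac{\bigl(we^{z/2}\bigr)^{n}-\bigl(we^{-z/2}\bigr)^{n}}{n}\,J_n
\;=\;z\,u_0+\Theta\bigl(we^{z/2}\bigr)-\Theta\bigl(we^{-z/2}\bigr),
\]
where $\Theta(w):=\sum_{n\neq0}\frac{w^{n}}{n}\,J_n$ is a potential for the current, $w\partial_w\Theta(w)=\sum_{n\neq0}w^{n}J_n$.

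\textbf{Step 2.} Since $z u_0$ is a scalar and $\Phi$ matches the bosonic normal ordering with the current normal ordering (both put the modes $J_n$, $n>0$, to the left of the $J_n$, $n<0$), exponentiating gives
\[
\Phi^{-1}\!\Bigl(:e^{z\,s(iz\partial_x)u(x)}:\Bigr)\Phi
\;=\;e^{z u_0}\,:e^{\,\Theta(w_1)-\Theta(w_2)}:,\qquad w_1:=we^{z/2},\quad w_2:=we^{-z/2}.
\]
Now I use the vertex-operator form of the boson--fermion correspondence, whose mode version includes \eqref{psitoqp}: on $\mathcal F_0$, and for $|w_1|>|w_2|$,
\[
:e^{\,\Theta(w_1)-\Theta(w_2)}:\;=\;\frac{1}{\kappa(w_1,w_2)}\,\psi(w_1)\,\psi^*(w_2),
\qquad
\kappa(w_1,w_2):=\langle 0|\psi(w_1)\psi^*(w_2)|0\rangle=\frac{w_2}{w_1-w_2}.
\]
For $w_1=we^{z/2}$, $w_2=we^{-z/2}$ the hypothesis $|w_1|>|w_2|$ is exactly ${\rm Re}\,z>0$, and $\kappa=1/(e^{z}-1)$ is \emph{independent of} $w$; hence
\[
\Phi^{-1}\hat{\mathcal H}(z,u_0,1)\,\Phi
\;=\;\frac{e^{z u_0}\,(e^{z}-1)}{2\pi\,s(z)}\int_{0}^{2\pi}\psi\bigl(we^{z/2}\bigr)\,\psi^*\bigl(we^{-z/2}\bigr)\,dx .
\]

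\textbf{Step 3.} In the integrand the coefficient of $\psi_k\psi^*_l$ is $e^{i(k-l)x}\,e^{(z/2)(k+l-1)}$, so $\tfrac1{2\pi}\int_0^{2\pi}dx$ retains only $k=l$ and produces $e^{-z/2}\sum_k e^{kz}\,\psi_k\psi^*_k$. Writing $\psi_k\psi^*_k=\,:\psi_k\psi^*_k:+\langle 0|\psi_k\psi^*_k|0\rangle$, the c-number being $1$ for $k<0$ and $0$ for $k>0$, and summing the geometric series $\sum_{k<0}e^{kz}=\dfrac{e^{-z/2}}{1-e^{-z}}=\dfrac1{z\,s(z)}$ (again convergent precisely for ${\rm Re}\,z>0$), together with $(e^{z}-1)e^{-z/2}=2\sinh\tfrac z2=z\,s(z)$, all the prefactors collapse and one obtains
\[
\Phi^{-1}\hat{\mathcal H}(z,u_0,1)\,\Phi
\;=\;e^{z u_0}\Bigl(z\!\!\sum_{k\in\mathbb Z+\frac12}\!\!e^{kz}:\psi_k\psi^*_k:\;+\;\frac1{s(z)}\Bigr),
\]
which is \eqref{rossi}.

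\textbf{Where the difficulty sits.} Everything outside Step 2 is bookkeeping; the genuine input is the vertex-operator identity, i.e.\ the fact that $\psi(w)$ and $\psi^*(w)$ realise the standard vertex operators of the Heisenberg algebra $\{J_n\}$. Proving it cleanly means assembling \eqref{psitoqp}, the relations $\Phi^{-1}q_n\Phi=J_n$, $\Phi^{-1}p_n\Phi=J_{-n}$ and the definition \eqref{bofe} of $\Phi$ into an identity of generating series, while controlling the charge--lattice data (cocycle and zero-mode factors) attached to $\psi(w)$ and $\psi^*(w)$ separately; these cancel on the charge-zero subspace $\mathcal F_0$, which is exactly why the product $\psi(w_1)\psi^*(w_2)$ behaves well there. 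This is classical (\cite{jmd}, \cite{az}) and, in this precise setting, the computation is due to P.~Rossi \cite{rossi}. The remaining subtleties are merely the two convergence conditions --- for the series defining $\kappa$ and for the scalar $1/(z\,s(z))$ --- both of which force ${\rm Re}\,z>0$, and the correct assignment of $\psi$ to the outer point $we^{z/2}$ and of $\psi^*$ to the inner point $we^{-z/2}$, consistent with $|w_1|>|w_2|$.
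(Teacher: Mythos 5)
Your argument is correct, and it is worth noting at the outset that the paper does not actually prove this Proposition: it is quoted from Rossi \cite{rossi} with no in-text argument, so there is nothing internal to compare against. Your derivation is the standard bosonization computation, and the bookkeeping checks out at every stage: $s(iz\partial_x)e^{inx}=s(zn)e^{inx}$ and $z\,s(zn)=\bigl(e^{zn/2}-e^{-zn/2}\bigr)/n$ correctly turn the exponent into $z\,u_0+\Theta\bigl(we^{z/2}\bigr)-\Theta\bigl(we^{-z/2}\bigr)$, with your modes $J_{\pm n}$ agreeing with the paper's formulas for $\Phi^{-1}q_n\Phi$ and $\Phi^{-1}p_n\Phi$ (and $J_{-n}$, $n>0$, being the annihilators, so the two normal orderings do coincide); the contraction $\kappa=\langle 0|\psi(w_1)\psi^*(w_2)|0\rangle=w_2/(w_1-w_2)$ specializes to $1/(e^z-1)$; the $x$-integration retains $k=l$ with weight $e^{-z/2}e^{kz}$; and the identities $\sum_{k<0}e^{kz}=1/(z\,s(z))$ together with $(e^z-1)e^{-z/2}=z\,s(z)$ collapse the prefactors to exactly \eqref{rossi} (a quick sanity check on $|\,0>$ gives $e^{zu_0}/s(z)$ on both sides). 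The one genuine input you do not establish is the vertex-operator identity $:e^{\Theta(w_1)-\Theta(w_2)}:\;=\kappa^{-1}\psi(w_1)\psi^*(w_2)$ on ${\mathcal F}_0$; you correctly isolate it as the crux, and normalizing the constant by the vacuum expectation value is the clean way to pin it down. Since the paper itself outsources the entire Proposition to \cite{rossi}, deferring this classical identity to \cite{jmd}, \cite{az} is a defensible stopping point, though a fully self-contained proof would still have to assemble it from \eqref{bofe} and \eqref{psitoqp}, which is precisely the mode-by-mode content of the identity you invoke.
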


Following \cite{az} construct a basis of the fermionic Fock space ${\mathcal F}_0$ labeled by partitions $\lambda=(\lambda_1 \geq \lambda_2 \geq \dots \geq \lambda_n>0)$ of arbitrary length $n\geq 0$. It is convenient to use Frobenius notations for partitions
\beq\label{frob}
\lambda=(\alpha_1, \dots, \alpha_d \, |\, \beta_1, \dots, \beta_d)
\eeq
where $d=d(\lambda)$ is the length of the diagonal of the Young tableau $\lambda$, the numbers $\alpha_i$, $\beta_i$ are defined as follows
$$
\alpha_i=\lambda_i-i, \quad \beta_i =\lambda_i'-i, \quad i=1, \dots, d(\lambda).
$$
Here $\lambda'$ is the transposed Young tableau.
Define the state $|\,\lambda> \in {\mathcal F}_0$ by
\begin{equation}\label{lambda}
|\,\lambda> = \psi_{\alpha_1+\frac12} \psi_{\alpha_2+\frac12}\dots  \psi_{\alpha_{d(\lambda)}+\frac12}\psi_{-\beta_{d(\lambda)} -\frac12}^* \dots \psi_{-\beta_2-\frac12}^*\psi_{-\beta_1-\frac12}^*    |\,0>.
\end{equation}
For empty partition put $|\,\emptyset >=|\,0>$.

The following statement readily follows from Proposition 2.10 of \cite{az}.

\begin{prop} For any partition $\lambda$ one has
\begin{equation}\label{lambdabasis}
\Phi (|\,\lambda>)= (-1)^{b(\lambda)} s_\lambda(q)
\end{equation}
where
$$
b(\lambda)=\sum_{i=1}^{d(\lambda)} (\beta_i+1).
$$
\end{prop}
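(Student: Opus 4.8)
\medskip\noindent\emph{Proof strategy.}
The plan is to unwind the boson--fermion correspondence \eqref{bofe} directly on the state $|\,\lambda>$ and reduce the statement to a determinant of elementary Schur polynomials. By \eqref{bofe}, $\Phi(|\,\lambda>)=<0\,|\,e^{K(q)}|\,\lambda>$. Since $K(q)$ is built from the operators $:\psi_j\psi^*_{j+n}:$ with $n\ge 1$, each of which annihilates $|\,0>$, one has $e^{K(q)}|\,0>=|\,0>$. First I would insert $e^{-K(q)}e^{K(q)}=1$ between the consecutive fermionic generators in \eqref{lambda} and use the conjugation rules \eqref{psitoqp} to strip $e^{K(q)}$ off the vacuum, obtaining
\[
\Phi(|\,\lambda>)=<0\,|\,\widetilde\psi_{\alpha_1+\frac12}\cdots\widetilde\psi_{\alpha_d+\frac12}\,\widetilde\psi^*_{-\beta_d-\frac12}\cdots\widetilde\psi^*_{-\beta_1-\frac12}\,|0>,
\]
where $\widetilde\psi_i:=\sum_{k\ge 0}h_k(q)\,\psi_{i-k}$ and $\widetilde\psi^*_i:=\sum_{k\ge 0}h_k(-q)\,\psi^*_{i+k}$; in each graded component of $\mathbb C[q]$ these are finite sums, so the expression is well defined.

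\medskip Next I would evaluate this fermionic vacuum expectation by Wick's theorem. All $\psi$--operators stand to the left of all $\psi^*$--operators, and the only nonvanishing contraction is $<0\,|\psi_p\psi^*_q|0>=\delta_{p,q}$ for $p<0$; thus the expectation reduces to a $d\times d$ determinant whose $(i,j)$ entry is the single--hook matrix element
\[
M_{ij}=\sum_{k,m\ge 0}h_k(q)\,h_m(-q)\,<0\,|\psi_{\alpha_i+\frac12-k}\,\psi^*_{-\beta_j-\frac12+m}|0>=\sum_{m=0}^{\beta_j}h_{\alpha_i+\beta_j+1-m}(q)\,h_m(-q),
\]
the contraction forcing $k+m=\alpha_i+\beta_j+1$ with $k\ge\alpha_i+1$. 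Using $e_m(q)=(-1)^m h_m(-q)$ and the hook expansion $s_{(a\,|\,b)}(q)=\sum_{j=0}^{b}(-1)^j h_{a+1+j}(q)\,e_{b-j}(q)$, one recognises $M_{ij}=(-1)^{\beta_j}s_{(\alpha_i\,|\,\beta_j)}(q)$, with $(\alpha_i\,|\,\beta_j)$ a hook in Frobenius notation.

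\medskip It then remains to identify $\det\big(s_{(\alpha_i\,|\,\beta_j)}(q)\big)_{1\le i,j\le d}$ with $s_\lambda(q)$ for $\lambda=(\alpha_1,\dots,\alpha_d\,|\,\beta_1,\dots,\beta_d)$, which is exactly the Giambelli (bialternant--in--hooks) identity, see \cite{mac}. Factoring the $(-1)^{\beta_j}$ out of the columns and combining with the signs produced by Wick's theorem and by the ordering of the operators in \eqref{lambda} then gives $\Phi(|\,\lambda>)=(-1)^{b(\lambda)}s_\lambda(q)$.

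\medskip The one genuinely delicate point will be this final sign count: one has to keep exact track of the Wick permutation sign, of the reversal of the $\psi^*$'s in \eqref{lambda}, and of the factor $(-1)^{\beta_j}$ of each hook block, and check that altogether they produce precisely $(-1)^{\sum_i(\beta_i+1)}$ in the conventions fixed in Section~\ref{sec1}. This bookkeeping is the content of Proposition~2.10 of \cite{az}, which I would invoke for it; everything else is the routine vertex--operator computation sketched above. (Alternatively one could rewrite $|\,\lambda>$ in single--Maya--diagram form and read off the Jacobi--Trudi determinant $\det\big(h_{\lambda_i-i+j}(q)\big)$ directly, at the price of a less transparent sign.)
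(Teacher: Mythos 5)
Your overall strategy is sound and is, in substance, the argument hiding behind the paper's one-line proof (the paper gives no computation at all: it simply states that the proposition ``readily follows from Proposition 2.10 of \cite{az}''). Conjugating the generators in \eqref{lambda} by $e^{K(q)}$ via \eqref{psitoqp}, evaluating the vacuum expectation by Wick's theorem, recognising the single-hook matrix elements, and finishing with the Giambelli identity is exactly the standard derivation, and each of those steps is correct as you state it, including $M_{ij}=(-1)^{\beta_j}s_{(\alpha_i|\beta_j)}(q)$.

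The gap is precisely the step you declared ``delicate'' and outsourced: the sign. In the nested ordering of \eqref{lambda} the Wick expansion of $<0\,|\psi_{a_1}\cdots\psi_{a_d}\psi^*_{b_d}\cdots\psi^*_{b_1}|\,0>$ is exactly $\det\bigl(<0\,|\psi_{a_i}\psi^*_{b_j}|\,0>\bigr)$ with \emph{no} extra permutation sign (the pairing of $a_i$ with $b_i$ is non-crossing), and the ``reversal of the $\psi^*$'s'' costs nothing further because it is already built into that ordering. So your computation, carried to the end, yields $\Phi(|\,\lambda>)=(-1)^{\sum_i\beta_i}s_\lambda(q)$, which differs from the asserted $(-1)^{b(\lambda)}s_\lambda(q)$ by $(-1)^{d(\lambda)}$: the extra $(-1)^{d}$ you hope the bookkeeping will supply is not there. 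A one-line sanity check: for $\lambda=(1)$ one has $|\,\lambda>=\psi_{1/2}\psi^*_{-1/2}|\,0>$, and your own formulae give $\Phi(|\,\lambda>)=h_1(q)=q_1=+s_{(1)}(q)$ (consistent with the paper's $\Phi^{-1}q_1\,\Phi=\sum_j:\psi_j\psi^*_{j-1}:$ applied to the vacuum), whereas $(-1)^{b((1))}=-1$. Consequently you cannot both run this computation and ``invoke Proposition 2.10 of \cite{az}'' for the sign: the two give different answers unless you first reconcile conventions (in \cite{az} the factor $(-1)^{b(\lambda)}$ is absorbed into the definition of the state, and the labeling and ordering of the fermions differ), and as written your proof asserts a sign it does not establish. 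The discrepancy is immaterial for the paper's later use of the proposition, since eigenvectors are only needed up to a nonzero scalar, but a proof of the statement as written must pin the sign down explicitly --- e.g.\ by the Wick count above together with a check on a single hook --- rather than defer it.
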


Denote
\begin{equation}\label{rossi0}
{\mathcal O}(z) =  \sum_{k\in\mathbb Z+\frac12} e^{k\, z} : \psi_k \psi_k^*:
\end{equation}
the nontrivial part of the operator \eqref{rossi}.

\begin{lem} The states $|\,\lambda>$ are eigenvectors of the Hamiltonian \eqref{rossi0}
\begin{equation}\label{rossi1}
{\mathcal O}(z) |\,\lambda>=  \sum_{i=1}^{d(\lambda)} \left[ e^{z\left( \alpha_i+\frac12\right)} - e^{-z\left( \beta_i+\frac12\right)}\right] \, |\,\lambda>.
\end{equation}
Here the partition $\lambda$ is represented in the Frobenius form \eqref{frob}.
\end{lem}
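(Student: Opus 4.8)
The plan is to use the fact that $\mathcal{O}(z)$ acts diagonally on the fermionic modes, so that conjugating a creation operator by $\mathcal{O}(z)$ just rescales it, and then to commute $\mathcal{O}(z)$ through the product of creation operators in \eqref{lambda} until it hits the vacuum. First I would note that normal ordering is irrelevant for commutators: the difference $\psi_k\psi_k^* - {:}\psi_k\psi_k^*{:}$ is a $c$-number, so $[{:}\psi_k\psi_k^*{:},\, a] = [\psi_k\psi_k^*,\, a]$ for any operator $a$. Using only the anticommutation relations \eqref{fermi} one computes
$$
[\psi_k\psi_k^*,\, \psi_j] = \delta_{kj}\,\psi_j, \qquad [\psi_k\psi_k^*,\, \psi_j^*] = -\,\delta_{kj}\,\psi_j^*,
$$
a short manipulation (e.g. $\psi_k\psi_k^*\psi_j = \delta_{kj}\psi_k - \psi_k\psi_j\psi_k^*$ while $\psi_j\psi_k\psi_k^* = -\psi_k\psi_j\psi_k^*$).

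Summing these against the weights $e^{k\,z}$ in \eqref{rossi0} gives
$$
[\mathcal{O}(z),\, \psi_j] = e^{j\,z}\,\psi_j, \qquad [\mathcal{O}(z),\, \psi_j^*] = -\,e^{j\,z}\,\psi_j^*,
$$
equivalently $\mathcal{O}(z)\,\psi_j = \psi_j\bigl(\mathcal{O}(z) + e^{j\,z}\bigr)$ and $\mathcal{O}(z)\,\psi_j^* = \psi_j^*\bigl(\mathcal{O}(z) - e^{j\,z}\bigr)$. Now, since in the Frobenius form \eqref{frob} all $\alpha_i\geq 0$ and all $\beta_i\geq 0$, every operator $\psi_{\alpha_i+\frac12}$ (index $>0$) and $\psi^*_{-\beta_i-\frac12}$ (index $<0$) occurring in \eqref{lambda} is a creation operator. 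Applying $\mathcal{O}(z)$ to $|\,\lambda>$ and moving it successively to the right past these $2\,d(\lambda)$ factors, I pick up the additive term $e^{z\left(\alpha_i+\frac12\right)}$ from each $\psi_{\alpha_i+\frac12}$ and $-\,e^{-z\left(\beta_i+\frac12\right)}$ from each $\psi^*_{-\beta_i-\frac12}$; finally $\mathcal{O}(z)\,|\,0> = 0$ because each summand annihilates the vacuum (for $k>0$, ${:}\psi_k\psi_k^*{:} = \psi_k\psi_k^*$ and $\psi_k^*|\,0> = 0$; for $k<0$, ${:}\psi_k\psi_k^*{:} = -\psi_k^*\psi_k$ and $\psi_k|\,0> = 0$). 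Collecting the terms yields precisely \eqref{rossi1}.

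There is essentially no obstacle here; the only point worth a remark is that $\mathcal{O}(z)$ is a priori an infinite sum of operators, but on the fixed state $|\,\lambda>$ only finitely many of the commutators above are nonzero and the vacuum is annihilated term by term, so the computation is legitimate and the eigenvalue comes out as the stated finite sum of exponentials, entire in $z$. The restriction $\mathrm{Re}\,z>0$ is needed only afterwards, when $\sum_k e^{k\,z}\,{:}\psi_k\psi_k^*{:}$ is related back through \eqref{rossi} to the analytic generating function $\hat{\mathcal H}(z,u_0,\hbar)$, not for the present identity.
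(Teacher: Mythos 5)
Your proof is correct and is essentially the same computation as the paper's: both rest on the anticommutation relations \eqref{fermi}, the identification of which modes are occupied in $|\,\lambda>$, and the annihilation of the vacuum by ${\mathcal O}(z)$. The only cosmetic difference is that you organize the calculation via the commutators $[{\mathcal O}(z),\psi_j]=e^{jz}\psi_j$ and $[{\mathcal O}(z),\psi_j^*]=-e^{jz}\psi_j^*$, whereas the paper splits the sum into $k>0$ and $k<0$ parts and argues directly which terms survive; both routes yield \eqref{rossi1}.
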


\begin{proof} One has
\begin{align*}
&
{\mathcal O}(z) |\,\lambda>=\sum_{k>0 }e^{k\, z} \psi_k \psi_k^* \psi_{\alpha_1+\frac12} \psi_{\alpha_2+\frac12}\dots  \psi_{\alpha_{d(\lambda)}+\frac12}\psi_{-\beta_{d(\lambda)} -\frac12}^* \dots \psi_{-\beta_2-\frac12}^*\psi_{-\beta_1-\frac12}^*    |\,0>
\\
&
-\sum_{k>0} e^{-k\, z} \psi_{-k}^* \psi_{-k}\psi_{\alpha_1+\frac12} \psi_{\alpha_2+\frac12}\dots  \psi_{\alpha_{d(\lambda)}+\frac12}\psi_{-\beta_{d(\lambda)} -\frac12}^* \dots \psi_{-\beta_2-\frac12}^*\psi_{-\beta_1-\frac12}^*    |\,0>.
\end{align*}
Since $\psi_k^*$ annihilates the vacuum for $k>0$, in the first sum the only nonzero terms are those for which $k=\alpha_i+\frac12$ for some $1\leq i\leq d(\lambda)$. So
\begin{align*}
&
\sum_{k>0 }e^{k\, z} \psi_k \psi_k^* \psi_{\alpha_1+\frac12} \psi_{\alpha_2+\frac12}\dots  \psi_{\alpha_{d(\lambda)}+\frac12}\psi_{-\beta_{d(\lambda)} -\frac12}^* \dots \psi_{-\beta_2-\frac12}^*\psi_{-\beta_1-\frac12}^*    |\,0>
\\
&=\sum_{i=1}^{d(\lambda)} e^{z\, \left(\alpha_i+\frac12\right)} |\lambda>.
\end{align*}
In a similar way in the second sum the nonzero terms come from $k=\beta_i+\frac12$, $1\leq i \leq d(\lambda)$. This gives the second half of eq. \eqref{rossi1}. \end{proof}

It is easy to see that
\beq\label{dva}
\sum_{i=1}^{d(\lambda)} \left[ e^{z\left( \alpha_i+\frac12\right)} - e^{-z\left( \beta_i+\frac12\right)}\right] =\sum_{i=1}^{l(\lambda)} \left[ e^{z\left( \lambda_i-i+\frac12\right)} -e^{z\left(-i+\frac12\right)}\right].
\eeq
So, to complete the proof of the theorem it remains to reinsert the Planck constant. This can be done by rescaling
$$
q_k \mapsto \hbar^{\frac{k-1}2} q_k, \quad z\mapsto \hbar^{\frac12} z.
$$
The theorem is proved.

\begin{cor} For any partition $\lambda=(\lambda_1\geq \lambda_2\geq \dots \geq \lambda_{l(\lambda)}>0)=(\alpha_1, \dots, \alpha_{d(\lambda)} \, |\, \beta_1, \dots, \beta_{d(\lambda)})$ the polynomials $ s_\lambda\left(\frac{q}{\hbar^{1/2}}\right)$ are common eigenvectors of the commuting operators $\hat H_k$
\beq\label{hk}
\hat H_k  s_\lambda\left(\frac{q}{\hbar^{1/2}}\right)={ E}_k (\lambda, u_0, \hbar) s_\lambda\left(\frac{q}{\hbar^{1/2}}\right) , \quad k\geq -1
\eeq
with the eigenvalues given by one of the two equivalent expressions 
\begin{eqnarray}\label{eigen}
&&
{ E}_k (\lambda, u_0, \hbar)=c_k(u_0, \hbar) 
\\
&&+\hbar^{\frac12} \sum_{i=1}^{l(\lambda)} \frac{ \left[ u_0+\hbar^{\frac12}\left( \lambda_i -i+\frac12\right)\right]^{k+1} -\left[ u_0+\hbar^{\frac12}\left(  -i+\frac12\right)\right]^{k+1} }{(k+1)!}
\nn\\
&&
=c_k(u_0, \hbar)+\hbar^{\frac12}\sum_{i=1}^{d(\lambda)} \frac{\left[ u_0+\hbar^{\frac12} \left( \alpha_i+\frac12\right)\right]^{k+1} -\left[ u_0-\hbar^{\frac12} \left( \beta_i+\frac12\right)\right]^{k+1}}{(k+1)!}
\nonumber
\end{eqnarray}
where
$$
c_k(u_0, \hbar)=-\frac1{(k+2)!} \sum_{j=0}^{k+2} \left(\begin{array}{c}k+2\\ j\end{array}\right) \left( 1-2^{1-j}\right) B_j \hbar^{\frac{j}2}u_0^{k-j+2},
$$
$B_j$ are Bernoulli numbers.
\end{cor}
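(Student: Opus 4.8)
The plan is to get the formula for $E_k$ directly from Theorem~\ref{thm1}. The eigenvector property and the pairwise commutativity are immediate once one expands the generating identity \eqref{gener} in powers of $z$, so the entire content of the corollary is the Taylor expansion at $z=0$ of
\[
E(z,\lambda,u_0,\hbar)=\hbar^{1/2}z\,e^{z\,u_0}\sum_{i\ge1}e^{z\,\hbar^{1/2}\left(\lambda_i-i+\frac12\right)},\qquad {\rm Re}\,z>0,
\]
separated into a $\lambda$-independent piece (which produces $c_k$) and a finite sum over the parts of $\lambda$ (which produces the rest).

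First I would split off a convergent geometric tail. Since $\lambda_i=0$ for $i>l(\lambda)$,
\[
\sum_{i\ge1}e^{z\hbar^{1/2}(\lambda_i-i+\frac12)}=\sum_{i=1}^{l(\lambda)}\Bigl(e^{z\hbar^{1/2}(\lambda_i-i+\frac12)}-e^{z\hbar^{1/2}(-i+\frac12)}\Bigr)+\sum_{i\ge1}e^{z\hbar^{1/2}(-i+\frac12)},
\]
and for ${\rm Re}\,z>0$ the last sum equals $\bigl(2\sinh(\hbar^{1/2}z/2)\bigr)^{-1}=\bigl(\hbar^{1/2}z\,s(\hbar^{1/2}z)\bigr)^{-1}$ by the definition of $s$. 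Multiplying through by $\hbar^{1/2}z\,e^{zu_0}$ then gives
\[
E(z,\lambda,u_0,\hbar)=\frac{e^{z\,u_0}}{s(\hbar^{1/2}z)}+\hbar^{1/2}z\sum_{i=1}^{l(\lambda)}\Bigl(e^{z(u_0+\hbar^{1/2}(\lambda_i-i+\frac12))}-e^{z(u_0+\hbar^{1/2}(-i+\frac12))}\Bigr),
\]
where both terms are holomorphic at $z=0$ (the finite sum being $O(z^2)$), so the coefficients in $z$ are well defined and the identity may be read as an equality of formal power series.

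Next, for the constant I would expand $1/s(t)=(t/2)/\sinh(t/2)$ using $1/\sinh x=\coth(x/2)-\coth x$ together with the classical series $\tfrac{x}{2}\coth\tfrac{x}{2}=\sum_{n\ge0}B_{2n}x^{2n}/(2n)!$, obtaining $1/s(t)=-\sum_{j\ge0}(1-2^{1-j})B_j\,t^j/j!$ (the odd terms drop out: $j=1$ because $1-2^{0}=0$, and odd $j\ge3$ because $B_j=0$, which is exactly why the sum in the statement may nominally run over all $j\le k+2$). Multiplying by $e^{zu_0}$ with $t=\hbar^{1/2}z$ and reading off the coefficient of $z^{k+2}$ reproduces $c_k(u_0,\hbar)$ verbatim. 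For the $\lambda$-dependent part I would combine exponents in the finite sum and use $[z^{k+1}]e^{za}=a^{k+1}/(k+1)!$ (the prefactor $\hbar^{1/2}z$ turning extraction of $z^{k+2}$ into extraction of $z^{k+1}$), which yields the first line of \eqref{eigen}; the second, Frobenius form then follows by applying identity \eqref{dva} — already established above — with $z$ replaced by $\hbar^{1/2}z$ before extracting coefficients, since \eqref{dva} is an identity of entire functions of $z$, hence holds coefficientwise.

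The hard part will be the middle step: pinning down the Bernoulli expansion of $1/s(t)$ in precisely the normalization $(1-2^{1-j})B_j$ and checking its compatibility with the displayed sum. Everything else — summing the geometric series, extracting Taylor coefficients from products of exponentials, and invoking \eqref{dva} — is routine bookkeeping.
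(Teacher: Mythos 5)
Your proposal is correct and follows essentially the same route as the paper: split off the tail $\sum_{i\ge1}e^{z\hbar^{1/2}(-i+\frac12)}=\bigl(\hbar^{1/2}z\,s(\hbar^{1/2}z)\bigr)^{-1}$, expand $1/s(t)$ in Bernoulli numbers (your normalization $-(1-2^{1-j})B_j$ agrees with the paper's $(2^{1-j}-1)B_j$), and read off Taylor coefficients, with the Frobenius form coming from the identity \eqref{dva}. No gaps.
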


\begin{proof} With the help of an obvious formula
$$
\sum_{i\geq 1} e^{z\, \left( -i+\frac12\right)} =\frac1{z\, s(z)}, \quad {\rm Re}\, z>0
$$
the expressions \eqref{gener} can be rewritten in the form
\begin{align*}
&
{ E}(z,\lambda, u_0, \hbar) =e^{z\, u_0} \left[ \frac1{s\left( \hbar^{\frac12} z\right)} +\hbar^{\frac12} z \sum_{i=1}^{l(\lambda)} \left( e^{z\,\hbar^{\frac12} \left( \lambda_i-i+\frac12\right) } -e^{z\, \hbar^{\frac12}\left( -i+\frac12\right)}\right)\right]
\\
&
=e^{z\, u_0} \left[ \frac1{s\left( \hbar^{\frac12} z\right)} +\hbar^{\frac12} z \sum_{i=1}^{d(\lambda)} \left( e^{z\,\hbar^{\frac12} \left( \alpha_i+\frac12\right) } -e^{-z\, \hbar^{\frac12}\left( \beta_i+\frac12\right)}\right)\right].
\end{align*}
Expanding the right-hand side in $z$ and using the Taylor expansion 
$$
\frac1{s(t)}=\sum_{n=0}^\infty \left(2^{1-n} -1\right)\frac{B_n}{n!} t^n
$$
one arrives at the needed formula for the eigenvalues.
\end{proof}

\begin{rem} \label{rem2.5} An alternative derivation \cite{op} of the spectrum of the operator \eqref{rossi0} uses a realization of the fermionic algebra by operators acting on the Sato infinite-dimensional Grassmannian. Let us briefly outline this construction that will be useful below. Introduce an infinite-dimensional space with a basis $e_k$,
$$
{\mathcal V}=\bigoplus_{k\in\mathbb Z+\frac12} \mathbb C e_k.
$$
Denote
\eqa\label{grass}
&&
\Lambda^{\frac{\infty}2} {\mathcal V}={\rm span} \left( e_{i_1}\wedge e_{i_2}\wedge\dots ~ | ~ i_1>i_2>\dots, \quad i_k=-k~ \mbox{for large}~ k\right),
\nn\\
&&
e_j\wedge e_i = - e_i \wedge e_j.
\eeqa
Action of the fermionic algebra is defined by adding or erasing a factor on the left of a semi-infinite wedge product
\beq\label{grass1}
\psi_k =e_k\wedge\,, \quad \psi_k^* =\frac{\pal}{\pal e_k}, \quad k\in\mathbb Z+\frac12.
\eeq
The vacuum vector reads
$$
|\,0>=e_{-1/2}\wedge e_{-3/2}\wedge\dots.
$$
For any partition $\lambda=(\lambda_1\geq \lambda_2\geq \dots \geq \lambda_{l}>0)$ consider
\beq\label{grass2}
v_\lambda= e_{\lambda_1-\frac12}\wedge e_{\lambda_2-\frac32}\wedge \dots \wedge e_{\lambda_l-l+\frac12}\wedge e_{-l-\frac12}\wedge\dots\in \Lambda^{\frac{\infty}2} {\mathcal V}.
\eeq
The vectors of the form \eqref{grass2} span the fermionic phase space ${\mathcal F}_0$.
According to \cite{op} it is an eigenvector of the operator \eqref{rossi0} with the eigenvalue
$$
\sum_{i=1}^{l} \left[ e^{z\left( \lambda_i-i+\frac12\right)} -e^{z\left(-i+\frac12\right)}\right].
$$
\end{rem}

\section{SFT potential of a disk}\label{sft}

Let us very briefly outline the main constructions of SFT; more details can be found in \cite{egh}, \cite{beh}, \cite{e}; see also \cite{rossi}. 

Let $(V,\alpha)$ be a contact manifold with a contact form $\alpha$. Denote $R$ the Reeb vector field on $V$. Let ${\mathcal P}(V)$ be the set of periodic orbits of $R$. For simplicity assume that all periodic orbits are non-degenerate; in this case ${\mathcal P}(V)$ is a discrete set. The first playing character is the \emph{SFT Hamiltonian} ${\bf H}^V$. The construction uses intersection theory on the moduli spaces of pseudoholomorphic maps
$$
f: (C_g, x_1, \dots, x_n;  y_1^-, \dots, y_{r_-}^-; y_1^+, \dots, y_{r_+}^+) \to V\times \mathbb R
$$
of Riemann surfaces of genus $g$ with $n$ marked points $x_1$, \dots, $x_n$, $r_-$ negative punctures $y_1^-$, \dots, $y_{r_-}^-\in C_g$ and $r_+$ positive punctures $y_1^+$, \dots, $y_{r_+}^+\in C_g$. One has to choose an almost complex structure $J$ on the cylinder $V\times \mathbb R$ compatible with the symplectic form $d\left( e^t\alpha\right)$ where $t\in\mathbb R$. It must be invariant with respect to $t$-translations and satisfy certain additional restrictions. The boundary conditions for the pseudoholomorphic maps near the positive/negative punctures depend on a choice of suitably oriented Reeb orbits $\gamma_1^+$, \dots, $\gamma_{r_+}^+$ and $\gamma_1^-$, \dots, $\gamma_{r_-}^-$ in ${\mathcal P}(V)$. Namely, the map $f$ must be asymptotically cylindrical near the negative punctures $y_1^-$, \dots, $y_{r_-}^-$ to the orbits $\gamma_1^-$, \dots, $\gamma_{r_-}^-$ at $t\to -\infty$ and to $\gamma_1^+$, \dots, $\gamma_{r_+}^+$ near the positive punctures $y_1^+$, \dots, $y_{r_+}^+$ at $t\to +\infty$. Denote 
\beq\label{moduli}
M_{g, n} (V; \mbox{\boldmath$\gamma$}^-, \mbox{\boldmath$\gamma$}^+), \quad \mbox{\boldmath$\gamma$}^+=(\gamma_1^-, \dots, \gamma_{r_-}^-),   \quad \mbox{\boldmath$\gamma$}^-=(\gamma_1^+, \dots, \gamma_{r_+}^+)
\eeq
the moduli space of such maps. The group $\mathbb R$ acts on the moduli space by translations. The quotient can be compactified by adding stable holomorphic buildings \cite{beh}. Denote $\overline{M_{g, n} (V; \mbox{\boldmath$\gamma$}^-, \mbox{\boldmath$\gamma$}^+)/\mathbb R}$ the compactified space. Choosing a system of closed differential forms $\theta_1$, \dots, $\theta_N$ on $V$ define the so-called \emph{SFT Hamiltonian} ${\bf H}^V={\bf H}^V({\bf t}, {\bf q}, {\bf p}, \hbar)$ as a generating function of the intersection numbers on the moduli spaces as follows\footnote{Our definition differs by a factor $\hbar$ from the one of \cite{egh}, \cite{e}. This factor will be restored below, see the Schr\"odinger equation \eqref{schroe0}.}
\eqa\label{sft1}
&&
{\bf H}^V=\sum_{g\geq 0} \hbar^g \sum_{n, r_-, r_+} \sum_{\substack{0\leq i_1, \dots, i_n\leq N\\
0\leq k_1, \dots, k_n}}\sum_{\substack{\gamma_{b_1}^-, \dots, \gamma_{b_{r_-}}^-\\
\gamma_{c_1}^+, \dots, \gamma_{c_{r_+}}^+}}\frac{t^{i_1}_{k_1}\dots t^{i_n}_{k_n}}{n!} \frac{q_{b_1}\dots q_{b_{r_-}}}{r_-!}\frac{p_{c_1}\dots p_{c_{r_+}}}{r_+!}
\nn\\
&&
\times \int\limits_{\overline{M_{g, n} (V; \,\mbox{\boldmath$\gamma$}^-, \mbox{\boldmath$\gamma$}^+)/\mathbb R}}{\rm ev}_1^* (\theta_{i_1})\psi_1^{k_1}\wedge\dots \wedge {\rm ev}_n^*(\theta_{i_n})\psi_n^{k_n}
\eeqa
Here ${\rm ev}_i$ are the evaluation maps
$$
{\rm ev}_i: \overline{M_{g, n} (V; \mbox{\boldmath$\gamma$}^-, \mbox{\boldmath$\gamma$}^+)/\mathbb R} \to V, \quad f \mapsto f(x_i), \quad i=1, \dots, n,
$$
$\psi_i=c_1\left( {\mathcal L}_i\right)$ are the Chern classes of the tautological line bundles over the moduli spaces.
All the 
independent variables in the formula are $\mathbb Z/2$-graded. In particular, the parity of the variables $t^i_k$, $i=1, \dots, N$, $k=0, \, 1, \dots$ is determined by parity of the chosen differential forms $\theta_1$, \dots, $\theta_N$. The $q$- and $p$-variables labeled by Reeb orbits belong to a Weyl (super)-algebra with the commutation relations
$$
\left[ p_\gamma, q_{\gamma'}\right]_{\pm} =\hbar\, {\rm mult}(\gamma) \delta_{\gamma\, \gamma'}.
$$
As the moduli spaces $\overline{M_{g, n} (V; \mbox{\boldmath$\gamma$}^-, \mbox{\boldmath$\gamma$}^+)/\mathbb R}$ are odd-dimensional, the generating function is an odd element of the algebra. 
Expanding ${\bf H}^V$ with respect to \emph{odd} variables $t^i_k$ and taking the cohomology class one obtains therefore infinite families of commuting Hamiltonians
$$
\hat H^V_{i,k} ({\bf q}, {\bf p}, \hbar):=\frac{\pal}{\pal t^i_k} \left[{\bf H}^V\right]_{{\bf t}=0}
$$
in the Weyl algebra.

In the simplest case $V=S^1$ the above construction yields the commuting quantum Hamiltonians  \eqref{generalh}. In this case  the set of Reeb orbits is just the set of integers. The moduli spaces \eqref{moduli} are nontrivial only for positive multiplicities. Thus one has independent variables $q_n$, $p_n$, $n=1, \, 2, \dots$ satisfying the commutation relation \eqref{pb2}. Choose $\theta_0=1$, $\theta_1= d\varphi$ as two differential forms on $S^1$ (here $\varphi$ is the angular coordinate on the circle).   The SFT Hamiltonian ${\bf H}^{S^1}$ is an odd element of the Weyl algebra with generators $p_n$, $q_n$, $n\geq 1$ depending on even variables $t^0_k$ and odd variables $t^1_k$, $k\geq 0$.  Taking derivatives with respect to odd variables produces the commuting quantum Hamiltonians according to the following procedure
 \cite{e}.

\begin{prop} Restriction of the derivatives of  ${\bf H}^{S^1}={\bf H}^{S^1}({\bf t}, q, p, \hbar)$ onto the locus $t^0_0=u_0$, $t^0_k=0$ for $k>0$, $t^1_k=0$ for all $k$ yields the commuting Hamiltonians \eqref{generalh}
\beq\label{hamk}
\frac{\pal}{\pal t^1_n}{\bf H}^{S^1}|_{t^0_0=u_0, \quad t^0_k=0~{\rm for}~k>0, ~ t^1_k=0}=\hat H_n(u_0,\hbar), \quad n\geq 0.
\eeq
\end{prop}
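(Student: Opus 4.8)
The plan is to compute the left-hand side of \eqref{hamk} straight from the definition \eqref{sft1}, in three steps: identify the relevant SFT moduli spaces of $S^1\times\mathbb{R}$ with relative Gromov--Witten moduli of $({\bf P}^1;0,\infty)$; evaluate the resulting generating series as an operator on the bosonic Fock space by the infinite-wedge technique of Section~\ref{sec1}; and match the answer with Rossi's formula \eqref{rossi}, which is the fermionic form of $\hat{\mathcal H}(z,u_0,\hbar=1)$. For the geometry: $S^1\times\mathbb{R}$ with its $\mathbb{R}$-invariant almost complex structure is biholomorphic to the cylinder $\mathbb{C}^*={\bf P}^1\setminus\{0,\infty\}$, and the Reeb orbits are the multiply covered circles, with multiplicity equal to the winding number; a puncture asymptotic to the orbit of multiplicity $b$ becomes a branch point of local order $b$ over $0$ for a negative puncture, over $\infty$ for a positive one. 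Hence a $J$-holomorphic map $C_g\to S^1\times\mathbb{R}$ with negative punctures of multiplicities $(b_1,\dots,b_{r_-})$ and positive punctures $(c_1,\dots,c_{r_+})$ extends to a degree $d$ branched cover $\bar C_g\to{\bf P}^1$, $d=\sum b_i=\sum c_j$, with ramification profile $(b_i)$ over $0$ and $(c_j)$ over $\infty$, while the $\mathbb{R}$-translations act by real scaling of $\mathbb{C}^*$ fixing $0$ and $\infty$. Quotienting by this action identifies $\overline{M_{g,m}(S^1;\ldots)/\mathbb{R}}$---virtual fundamental class, classes $\psi_i$ and evaluation maps included, with $d\varphi\mapsto\omega$ the point class and $\theta_0=1\mapsto 1$---with the moduli of relative stable maps to the rubber $({\bf P}^1;0,\infty)$ (the target taken modulo $\mathbb{C}^*$). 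This is what \cite{e}, \cite{egh} establish, resting on the compactness theorem of \cite{beh} and a coherent system of orientations; alternatively it is read off, as in \cite{op}, from the fact that the only contributing maps are branched covers, enumerated by double Hurwitz numbers.

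Next I would trace the effect of $\partial/\partial t^1_n$ and of the restriction. Differentiating in $t^1_n$ keeps exactly one marked point $x_1$ decorated by ${\rm ev}_1^*(d\varphi)\,\psi_1^n$; setting $t^0_0=u_0$ and all other couplings to zero sums over $m\geq 0$ extra marked points carrying the identity class, with weight $u_0^m/m!$, and discards every other insertion. By the first step the left-hand side of \eqref{hamk} becomes the generating series---over $g$, over $d$, and over the profiles, weighted by $\hbar^g\,q_{b_1}\cdots q_{b_{r_-}}/r_-!\cdot p_{c_1}\cdots p_{c_{r_+}}/r_+!$---of the relative Gromov--Witten invariants of $({\bf P}^1;0,\infty)$ with one descendant insertion $\tau_n(\omega)$ and $m$ insertions of the identity class. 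Reading, via \eqref{pq}, the $q_k$ and $p_k$ as the creation and annihilation operators attached to the positive Fourier modes of $u(x)$, this series is literally an operator $A_n(u_0,\hbar)$ on $\mathbb{C}[q_1,q_2,\dots]$, and \eqref{hamk} asserts $A_n=\hat H_n$. The identity insertions are harmless: by the (rubber) string equation each of them only lowers the descendant order, so that resumming them against $u_0^m/m!$ replaces the energy of the resulting one-point function by energy $+\,u_0$---equivalently, it produces the prefactor $e^{z\,u_0}$ in \eqref{rossi} and the shift $k\mapsto k+u_0$ in the eigenvalues.

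Then I would evaluate $A_n$. Transporting it (at $\hbar=1$) to the fermionic Fock space $\mathcal F_0$ of Section~\ref{sec1} and invoking the Okounkov--Pandharipande ``completed cycles'' computation of stationary descendants of ${\bf P}^1$ relative to $\{0,\infty\}$, the insertion $\tau_n(\omega)$ is realized on $\mathcal F_0$ by the second-quantized operator $\frac1{(n+1)!}\sum_{k\in\mathbb{Z}+\frac12}(k+u_0)^{n+1}:\psi_k\psi_k^*:$ together with a scalar that collects the contributions of the strata carrying no relative marked points (the completion terms). Summing over $n$ against $z^{n+2}$ reassembles exactly $e^{z\,u_0}\bigl(z\sum_{k\in\mathbb{Z}+\frac12}e^{k\,z}:\psi_k\psi_k^*:+1/s(z)\bigr)$, which by \eqref{rossi} is $\Phi^{-1}\hat{\mathcal H}(z,u_0,\hbar=1)\,\Phi$; hence $A_n=\hat H_n$ at $\hbar=1$. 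Restoring the Planck constant by the rescaling $q_k\mapsto\hbar^{(k-1)/2}q_k$, $z\mapsto\hbar^{1/2}z$---the one already used at the end of the proof of Theorem~\ref{thm1} and built into \eqref{generalh}---then yields \eqref{hamk} for all $n\geq 0$.

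The step I expect to be the main obstacle is the first one: the rigorous identification of the SFT moduli spaces of $S^1\times\mathbb{R}$---with their virtual fundamental classes, their $\psi$-classes, and their orientations and signs---with the relative Gromov--Witten moduli of $({\bf P}^1;0,\infty)$. It leans on the full analytic package of SFT (compactness of holomorphic buildings, transversality after abstract perturbation, a coherent choice of orientations), which is precisely what \cite{beh}, \cite{egh}, \cite{e} supply, or alternatively on the Hurwitz-theoretic shortcut of \cite{op}. Once that is granted, the only remaining care concerns the bookkeeping of the last step---tracking the scalar, $\hbar$-dependent terms $c_k(u_0,\hbar)$ produced by the completion strata and the overall powers of $\hbar$---which is routine in view of the fermionic presentation \eqref{rossi}.
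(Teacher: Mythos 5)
The paper does not actually prove this proposition: it is imported from Eliashberg's work (the sentence introducing it cites \cite{e}, the formula \eqref{generalh} is attributed to \cite{private}, \cite{e} in the introduction, and the fermionic form \eqref{rossi} to \cite{rossi}), so there is no in-paper argument to compare yours against. Your outline is a faithful reconstruction of how the result is established in that literature, and the algebraic part checks out: the completed-cycle operator $\frac1{(n+1)!}\sum_{k\in\mathbb Z+\frac12}(k+u_0)^{n+1}:\psi_k\psi_k^*:$ together with the completion constants does resum against $z^{n+2}$ to $e^{z\,u_0}\bigl(z\sum_k e^{kz}:\psi_k\psi_k^*:+1/s(z)\bigr)$, i.e.\ to \eqref{rossi}; and since \eqref{rossi} is a purely algebraic restatement of \eqref{generalh} under the boson--fermion correspondence, independent of any SFT input, matching the geometric generating series to it establishes \eqref{hamk} without circularity (the eigenvalue normalization in \eqref{eigen}, with $(k+1)!$ and the $u_0$-shift, confirms your choice of operator). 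The point to flag is that your first step --- identifying the compactified SFT moduli spaces $\overline{M_{g,n}(S^1;\cdots)/\mathbb R}$, with their virtual fundamental classes, $\psi$-classes, orientations and signs, with the rubber relative Gromov--Witten moduli of $({\bf P}^1;0,\infty)$ --- is the entire nontrivial content of the proposition, and you defer it wholesale to \cite{beh}, \cite{egh}, \cite{e}, \cite{op}. As a self-contained proof the proposal therefore has a gap exactly there; as a reconstruction of the argument that the paper itself outsources to a citation, it takes the intended route and the remaining bookkeeping is sound.
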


More generally, with a symplectic cobordism $W$ between two contact manifolds $V_+$ and $V_-$,
$$
\pal W=V_+ \cup (-V_-)
$$
one can associate a family of moduli spaces $M_{g,n}(W, \mbox{\boldmath$\gamma$}^-, \mbox{\boldmath$\gamma$}^+)$ labeled by collections of Reeb orbits $\mbox{\boldmath$\gamma$}^-=(\gamma_1^-, \dots, \gamma_{r_-}^-)$ on $V_-$ and $\mbox{\boldmath$\gamma$}^+=(\gamma_1^+, \dots, \gamma_{r_+}^+)$ on $V_+$. To this end one has to consider pseudoholomorphic maps with cylindrical ends of Riemann surfaces of genus $g$ with $r_-$ negative and $r_+$ positive punctures and also with $n$ marked points $x_1$, \dots, $x_n$ to the symplectic manifold $W\cup (V_-\times \left(-\infty, 0]\right)\cup \left(V_+\times [0,+\infty)\right)$ equipped with an appropriate almost complex structure. Evaluation maps are defined
$$
{\rm ev}_i: M_{g,n}(W, \mbox{\boldmath$\gamma$}^-, \mbox{\boldmath$\gamma$}^+)\to W, \quad i=1, \dots, n
$$
by taking images of the marked points. In this way one arrives at the SFT potential
\beq\label{sftpot}
Z_{\rm SFT}^W ({\bf t},{\bf q}, {\bf p}, \hbar)=e^{\frac1{\hbar} \sum_{g\geq 0} \hbar^g{\mathcal F}_g^{\rm SFT} \left( {\bf t}, {\bf q}, {\bf p}\right)}.
\eeq
Here the genus $g$ part ${\mathcal F}_g^{\rm SFT} \left( {\bf t}, {\bf q}, {\bf p}\right)$ is the generating function of numbers obtained by
integrating over the compactified moduli spaces $\overline{M_{g,n}(W, \mbox{\boldmath$\gamma$}^-, \mbox{\boldmath$\gamma$}^+)}$ the pull-backs ${\rm ev}_i^*(\phi)$ of differential forms $\phi$ on $W$ along with $\psi$-classes. The independent variables ${\bf q}=(q_{\gamma^-})$ and ${\bf p}=(p_{\gamma^+})$ are labeled by Reeb orbits on $V_-$ and $V_+$ respectively.

If $W$ is obtained by glueing together two symplectic cobordisms
$$
W=W_1 \cup W_2, \quad \pal W_1=V\cup (-V_-), \quad \pal W_2=V_+ \cup (-V)
$$
between contact manifold $V_-$ and $V$ and $V$ and $V_+$ respectively then the SFT potential of $W$ can be obtained by applying the following glueing rules
\begin{align}\label{glue}
&
Z^W({\bf q}_-, {\bf p}_+) =Z^{W_1}\left({\bf q}_-, \overrightarrow{\bf p}\right) Z^{W_2} \left( {\bf q}, {\bf p}_+\right)|_{{\bf q}=0}
\\
&
\quad\quad\quad\quad\quad\,\,\,=Z^{W_1}\left({\bf q}_-, {\bf p}\right) Z^{W_2} \left( \overleftarrow{\bf q}, {\bf p}_+\right)|_{{\bf p}=0}.
\nn
\end{align}
The variables ${\bf q}=(q_\gamma)$ and ${\bf p}=(p_\gamma)$ are labeled by Reeb orbits on $V$. The left-acting operator $\overrightarrow{p}_\gamma$ and right-acting operator $\overleftarrow{q}_\gamma$ are defined by
$$
\overrightarrow{p}_\gamma=\hbar\, {\rm mult}(\gamma)\, \frac{\pal}{\pal q_\gamma}, \quad \overleftarrow{q}_\gamma=\hbar\, {\rm mult}(\gamma)\, \frac{\pal}{\pal p_\gamma}.
$$

In the particular case of a symplectic manifold $W$ with a boundary $V=V_-$ the total SFT potential depends only on the canonical coordinates $q_\gamma$ labeled by Reeb orbits on $V$. It can be considered as a state in the bosonic Fock space generated by Reeb orbits on $V$. This state will be denoted by
$$
\Psi_{\rm SFT}^W({\bf t}, {\bf q}, \hbar):=Z_{\rm SFT}^W({\bf t}, {\bf q}, \hbar).
$$
In a similar way, for a symplectic manifold $W$ with a boundary $V=V_+$ one obtains a state in the dual Fock space
$$
\Psi_{\rm SFT}^W({\bf t}, {\bf p}, \hbar):=Z_{\rm SFT}^W({\bf t}, {\bf p}, \hbar).
$$
Finally, for a closed symplectic manifold $W$ the total SFT potential coincides with the GW total descendent potential
$$
Z_{\rm SFT}^W({\bf t}, \hbar)=Z_{\rm GW}^W({\bf t}, \hbar).
$$

To insert pullbacks of cohomology classes of the form $\phi=\theta\wedge d\rho$,  $\theta\in \Omega^{\rm odd}(V_+)$, $d\rho\in \Omega^1_{\rm comp}(\mathbb R)$ along with their descendents one can use the SFT Hamiltonians \eqref{sft1}:
\beq\label{schroe}
Z_{\rm SFT}^W ({\bf t}, {\bf q}_-,{\bf p}_+, \hbar) =\left( Z^W_{\rm SFT}({\bf q}_-, {\bf p}, \hbar)\, e^{\frac1{\hbar} \sum_{k\geq 0} t_k^\phi \hat H_k^{V_+} (\overleftarrow{\bf q}, {\bf p}_+, \hbar)}\right)_{{\bf p}=0}
\eeq
where the quantum Hamiltonians $H_k^{V_+}$ are defined in a way similar to \eqref{hamk},
$$
\hat H_k^{V_+} ({\bf q}, {\bf p}, \hbar)=\frac{\pal}{\pal t^\theta_k} {\bf H}^{V_+}({\bf t}, {\bf q}, {\bf p}, \hbar)|_{{\bf t}=0}, \quad k\geq 0.
$$

In the calculations of this section the following two formulae from the theory of Schur polynomials will be useful (see, e.g., the Macdonald book \cite{mac}).

\begin{prop} The following formulae hold true
\begin{equation}\label{form1}
q_1^{n} = \sum_{|\lambda|=n} {\rm dim}\, \lambda \cdot s_\lambda(q_1, \dots, q_n)
\end{equation}
\begin{equation}\label{form2}
s_\lambda(q_1, \dots, q_n) =\frac{{\rm dim}\, \lambda}{n!} q_1^n +\dots, \quad n=|\lambda|
\end{equation}
where periods stand for monomials of lower degree in $q_1$.
\end{prop}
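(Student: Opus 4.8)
The plan is to treat both identities as classical facts about symmetric functions. Under the definition \eqref{elem} the variables $q_k$ play the role of the power sums $p_k$, so $s_\lambda(q)$ is the ordinary Schur function written in the power-sum basis (the rescaling $x_k=q_k/k$ mentioned earlier does not affect Schur/power-sum expansions), and all the combinatorial input below is available in \cite{mac}. Recall also that $\dim\lambda$ throughout denotes the number of standard Young tableaux of shape $\lambda$.

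For \eqref{form1} I would induct on $n$. The base case $n=0$ is $q_1^0=1=s_\emptyset$. For the inductive step note that $q_1=p_1=s_{(1)}$, so multiplication by $q_1$ is governed by the Pieri rule $q_1\,s_\mu=\sum_\lambda s_\lambda$, the sum being over all $\lambda$ obtained from $\mu$ by adding one box. Hence
\[
q_1^n=q_1\cdot q_1^{n-1}=q_1\sum_{|\mu|=n-1}\dim\mu\cdot s_\mu=\sum_{|\lambda|=n}\Bigl(\sum_{\mu}\dim\mu\Bigr)s_\lambda,
\]
where the inner sum runs over partitions $\mu$ obtained from $\lambda$ by deleting a corner box. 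Since $\dim\lambda$ satisfies exactly this branching recursion $\dim\lambda=\sum_\mu\dim\mu$ with $\dim\emptyset=1$, the coefficient of $s_\lambda$ is $\dim\lambda$, which is \eqref{form1}.

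For \eqref{form2}, observe first that $s_\lambda(q_1,\dots,q_n)$ is graded-homogeneous of degree $n=|\lambda|$ with $\deg q_i=i$, so $q_1^n$ is the unique monomial in which $q_1$ occurs to the full degree $n$; every other monomial automatically has lower degree in $q_1$, and it remains only to identify the coefficient of $q_1^n$. One route is to expand $s_\lambda$ in the power-sum (equivalently, monomial-in-$q$) basis via the Hall pairing, using $\langle s_\lambda,s_\mu\rangle=\delta_{\lambda\mu}$ and $\langle p_\mu,p_\mu\rangle=z_\mu$: the coefficient of $p_{(1^n)}=q_1^n$ is $z_{(1^n)}^{-1}\langle s_\lambda,q_1^n\rangle=\tfrac1{n!}\langle s_\lambda,\sum_\nu\dim\nu\cdot s_\nu\rangle=\dim\lambda/n!$, where \eqref{form1} was used to rewrite $q_1^n$. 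A more self-contained alternative is to reduce the Jacobi--Trudi determinant $s_\lambda=\det(h_{\lambda_i-i+j})$ modulo monomials of lower degree in $q_1$: since $h_m\equiv q_1^m/m!$ there, expanding the determinant gives $s_\lambda\equiv q_1^n\det\bigl(1/(\lambda_i-i+j)!\bigr)_{1\le i,j\le l(\lambda)}$, and the Frobenius--Young determinant formula (equivalently the hook-length formula) evaluates this determinant as $\dim\lambda/n!$.

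The only genuinely nontrivial ingredients are imported from classical symmetric-function theory — the Pieri rule, the branching recursion for the number of standard Young tableaux, the Hall-pairing orthogonality of Schur functions, and the Frobenius--Young determinant (or hook-length) formula — so I do not expect a real obstacle beyond bookkeeping. The one point that needs care is the normalization: that "coefficient of $q_1^n$" means the coefficient of the power-sum monomial $p_{(1^n)}$, and that "$\dim\lambda$" is the standard-tableau count, consistently with the property $\dim\lambda'=\dim\lambda$ used earlier in the paper.
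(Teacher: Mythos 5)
Your argument is correct. Note, however, that the paper does not actually prove this proposition: it states both identities as classical facts and simply refers to Macdonald's book, so there is no in-paper proof to compare against. What you have written is a sound, standard derivation of exactly the facts the paper imports --- \eqref{form1} is $p_1^n=\sum_{|\lambda|=n}f^\lambda s_\lambda$ obtained by induction from the Pieri rule together with the branching recursion for the number of standard Young tableaux, and \eqref{form2} is the statement that the coefficient of $p_{(1^n)}$ in the power-sum expansion of $s_\lambda$ is $\chi^\lambda_{(1^n)}/z_{(1^n)}=\dim\lambda/n!$. Your preliminary observation that, by graded homogeneity with $\deg q_i=i$, the only monomial of full $q_1$-degree is $q_1^n$ itself is exactly the point needed to make sense of ``lower degree in $q_1$,'' and your two routes to the coefficient (Hall-pairing orthogonality using \eqref{form1}, or reduction of the Jacobi--Trudi determinant via $h_m\equiv q_1^m/m!$ followed by the Frobenius--Young evaluation) are both valid; the mild circularity of the first route is harmless since \eqref{form1} is established first. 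In short, you have supplied a complete proof where the paper offers only a citation.
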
 

In these formulae ${\rm dim}\, \lambda$ is the dimension of the irreducible representation of the symmetric group $S_n$ associated with the Young tableau $\lambda$. Recall that this dimension can be computed using the hook length formula
$$
{\rm dim}\, \lambda =\frac{|\lambda|!}{\prod_{(i,j)\in\lambda} h(i,j)}
$$
({\it ibid}.) 

We are now ready to prove Theorem \ref{thm2}.

\begin{proof} According to the prescription \eqref{schroe} one has to solve a system of pairwise commuting non-stationary Schr\"odinger equations for the wave function $\Psi=\Psi_{\rm SFT}^{\rm disk}$
\beq\label{schroe0}
\hbar \frac{\partial}{\partial t_k} \Psi =\Psi \, \overleftarrow{H_k}, \quad k=0, \, 1, \, 2, \dots
\eeq
with the plane wave initial data
$$
\Psi|_{{\bf t}=0} = e^{\frac{p_1}{\hbar}}.
$$
Using eq. \eqref{form1} one decomposes the initial condition in a linear combination of common eigenvectors of the commuting Hamiltonians 
$$
e^{\frac{p_1}{\hbar}}=\sum_\lambda \hbar^{-\frac{|\lambda|}2}\frac{\dim \lambda}{|\lambda|!}  s_\lambda\left(\frac{p}{\hbar^{1/2}}\right).
$$
With the help of such a decomposition, using \eqref{eigen}, one arrives at \eqref{cap}. \end{proof}

We will now prove Corollary \ref{cor1.6} .

\begin{proof} According to \cite{jmd} tau-functions of the KP hierarchy correspond to simple multivectors
$$
\tau(q_1, q_2, \dots)=\Phi\left(\varphi_{-\frac12}\wedge \varphi_{-\frac32}\wedge \dots\right)
,\quad 
\varphi_\nu  \in {\mathcal V}, \quad \nu \in\mathbb Z+\frac12, \quad \nu<0
$$
(for notations see Remark \ref{rem2.5} above). Choosing
\begin{align*}
&
\varphi_\nu =\sum_{i\geq 0} A_{\nu,i} e_{\nu+i}
\\
&
A_{\nu, i}=\frac1{i!} \exp\left\{ \hbar^{\frac12}\sum_{k\geq 0} t_k \frac{[u_0+\hbar^{\frac12}(\nu+i)]^{k+1}-[u_0+\hbar^{\frac12} \nu]^{k+1}}{(k+1)!}
\right\}
\end{align*}
(cf. \cite{kaz}) one obtains the expression \eqref{cap}.
\end{proof}

{\it Proof} of Corollary \ref{cor1.7}. According to the above procedures to compute the Gromov--Witten potential of ${\bf P}^1$ within the stationary sector one has to apply the SFT potential of the disk to the potential of the cap without insertions, i.e.,
$$
Z_{{\bf P}^1}^\bullet({\bf t}, z, \hbar)= \Psi_{\rm SFT}^{\rm disk} \left(u_0, {\bf t}, \overrightarrow{\bf p}; \hbar\right)e^{\frac{z\,q_1}{\hbar}}|_{q=0}.
$$
Here the independent parameter $z$ is added for convenience in order to separate terms of various degrees. Due to \eqref{form2} one has
$$
 s_\lambda (\overrightarrow{\bf p}) q_1^n|_{q=0}=\left\{\begin{array}{cl} \hbar^n\dim\lambda, & |\lambda|=n\\
 0, & \mbox{otherwise.}\end{array}\right. 
 $$
 This completes the proof.

\subsection*{Acknowledgment}
I am indebted to Y. Eliashberg for introducing me into Symplectic Field Theory. I also wish to thank G.~Bonelli, A.~Tanzini, and P.~Wiegmann for stimulating discussions. This work is partially supported by the PRIN 2010-11 Grant ``Geometric and analytic theory of Hamiltonian
systems in finite and infinite dimensions'' of Italian Ministry of Universities
and Researches and by 
Russian Federation Government Grant No. 2010-220-01-077.


\begin{thebibliography}{1}

\bibitem{a2010} A. Alexandrov, Matrix models for random partitions, {\sl Nucl. Phys.} {\bf B\,851} (2011) 620-650.

\bibitem{az} A. Alexandrov, A. Zabrodin, Free fermions and tau-functions, {\sl J. Geom. Phys.} {\bf 67} (2013) 37-80.

\bibitem{amos} H. Awata, Y. Matsuo, S. Odake, J. Shiraishi, Collective field theory, Calogero--Sutherland model and generalised matrix models, {\sl Phys. Lett.} {\bf B 347} (1995) 49

\bibitem{blz} V. Bazhanov, S. Lukyanov, A. Zamolodchikov, Integrable structure of conformal field theory, quantum KdV theory and thermodynamic Bethe ansatz, {\sl Commun. Math. Phys.} {\bf 177} (1996) 381-398; 

\noindent Integrable structure of conformal field theory II. Q-operator and DDV equation, {\sl Commun. Math. Phys.} {\bf 190} (1997) 247-278; 

\noindent Integrable structure of conformal field theory III. The Yang--Baxter relation, {\sl Commun. Math. Phys.} {\bf 200} (1999) 297-324.

\bibitem{bz} G. Bonelli, A. Sciarappa, A. Tanzini, P. Vasko, Six-dimensional supersymmetric gauge theories, quantum cohomology of instanton moduli spaces and $gl(N)$ Quantum Intermediate Long Wave Hydrodynamics, arXiv:1403.6454.

\bibitem{beh} F. Bourgeois, Y. Eliashberg, H. Hofer, K. Wysocki, E. Zehnder,  Compactness results in
Symplectic Field Theory. {\sl Geom. Topol.} {\bf 7} (2003), 799-888.

\bibitem{private} Y. Eliashberg, Private communication, 2002.

\bibitem{e} Y. Eliashberg, Symplectic field theory and applications, Proceedings of ICM 2006, 217-246.

\bibitem{egh} Y. Eliashberg, A. Givental and H. Hofer,  Introduction to Symplectic Field Theory, {\sl GAFA}
2000 Visions in Mathematics special volume, part II, 560-673.

\bibitem{gj} I. P. Goulden and D. M. Jackson, Transitive factorisations into transpositions and holomorphic mappings on the sphere, {\sl Proc. A.M.S.} {\bf 125} (1997), 51-60.

\bibitem{kp} D. Karabali, A. Polychronakos, Exact operator Hamiltonians and interactions in the droplet
bosonization method, arXiv:1403.5967.

\bibitem{kaz} M. Kazarian, KP hierarchy for Hodge integrals, {\sl Adv. Math.} {\bf 221} (2009) 1-21.

\bibitem{kl} M. Kazarian, S. Lando, An algebro-geometric proof of Witten's conjecture, {\sl J. Amer. Math. Soc.} {\bf 20:4} (2007), 1079-1089.

\bibitem{mac} I.G. Macdonald, {\sl Symmetric Functions and Hall Polynomials}, Oxford Mathematical Monographs, 2nd edition, 1995.

\bibitem{mn} A. Marshakov, N. Nekrasov, Extended Seiberg--Witten theory and integrable hierarchy, {\sl JHEP} 0701:104,2007

\bibitem{jmd} T. Miwa, M. Jimbo, E. Date, {\it Solitons: Differential Equations, Symmetries and
Infinite Dimensional Algebras}, Cambridge University Press, 2000.

\bibitem{ok} A. Okounkov, Infinite wedge and random partitions, {\sl Selecta Math.} {\bf 7} (2001) 57-81.

\bibitem{op} A. Okounkov, R. Pandharipande, Gromov-Witten theory, Hurwitz theory, and completed cycles, {\it Annals of Math.} (2) {\bf 163} (2006), 517-560.

\bibitem{pogreb} A. K. Pogrebkov, Boson-fermion correspondence and quantum integrable and dispersionless models, {\sl Russian Math. Surveys}, {\bf 58:5} (2003), 1003-1037

\bibitem{rossi} P. Rossi, Gromov--Witten invariants of target curves via Symplectic Field Theory, {\sl J. Geom. Phys.} {\bf 58} (2008) 931-941.

\bibitem{vak} R. Vakil, {\it Enumerative geometry of curves via degeneration methods}, Harvard Ph.D. thesis, 1997.
\end{thebibliography}
\end{document}